\documentclass[opre,nonblindrev]{my-informs3} 
\SingleSpacedXI
\usepackage{natbib}
 \bibpunct[, ]{(}{)}{,}{a}{}{,}%
\TheoremsNumberedThrough     %
\ECRepeatTheorems

\EquationsNumberedThrough    %
\usepackage[utf8]{inputenc}
\usepackage{amsmath}
\usepackage{amsfonts}
\usepackage{bbm} 

\usepackage{xcolor}
\definecolor{QuasiBlue}{rgb}{0.03,0.3,0.72}
\colorlet{blue}{QuasiBlue}

\usepackage{pgfplots}
\usepackage{graphicx}
\usepackage{xfrac}
\usepackage{tikz}
\usepackage{hyperref}
\usepackage[capitalise]{cleveref}
\usepackage{xfrac}
\usepackage[font=small, tablename=Table]{caption}
\usepackage{subcaption}
\usepackage{tcolorbox}
\usepackage{tikz-network}
\usepackage{textpos}
\usepackage{xfp}
\usepackage{pgffor}
\usepackage{siunitx}
\usepackage{multirow}
\usepackage{wrapfig}
\usepackage{enumitem}
\usepackage{soul}
\usepackage{bbm}
\usepackage{xspace}
\usepackage{mathrsfs} %
\usepackage{appendix}
\usepackage{quoting}

\newcommand\blfootnote[1]{%
  \begingroup
  \renewcommand\thefootnote{}\footnote{#1}%
  \addtocounter{footnote}{-1}%
  \endgroup
}

\newcommand{\poa}{\mathrm{PoA}}

\newcommand{\systemcost}{social cost\xspace}%
\newcommand{\socialcost}{social cost\xspace}%
\newcommand{\el}{k}
\newcommand{\hfun}{p}
\newcommand{\p}{q}

\newcommand{\llabel}{l}

\renewcommand{\mc}[1]{\mathcal{#1}}
\newcommand{\mb}[1]{\mathbb{#1}}

\newcommand{\PNEcost}{{\tt{NeCost}}}

\newcommand{\mincost}{{\tt{MinCost}}}
\newcommand{\SC}{{\rm{SC}}}

\newcommand{\opt}[1]{{#1}^{\rm opt}}
\newcommand{\NE}[1]{{#1}^{\rm ne}}

\newcommand{\NP}{{\sf NP}}
\newcommand{\Pclass}{{\sf P}}
\pgfplotsset{compat=1.15}

\newlength{\mynegspace}
\setlength{\mynegspace}{-0.6mm}

\newcommand{\mygezero}{>0}
\newcommand{\players}{{agents}}
\newcommand{\player}{{agent}}
\newcommand{\mya}{\alpha}
\newcommand{\myb}{\beta}

\newcommand{\mincon}{\texttt{MinSC}}
\newcommand{\gaplabelcover}{\texttt{GapLabelCover}}

\newcommand{\labelcover}{{\texttt{LabelCover}}}
\newcommand{\lab}{\mc{L}}

\renewcommand{\star}{\ast}

\newcommand{\T}{t}
\newcommand{\lalph}{\alpha}
\newcommand{\ralph}{\beta}
\newcommand{\Tset}{B}

\usetikzlibrary{shapes.misc}
\tikzset{cross/.style={cross out, draw, 
         minimum size=2*(#1-\pgflinewidth), 
         inner sep=0pt, outer sep=0pt}}
         
\usepackage{xr}
\definecolor{pnasblueback}{RGB}{205,217,235}

\definecolor{magenta}{RGB}{0,0,0}

\newcommand*\circled[1]{\tikz[baseline=(char.base)]{\node[shape=circle,draw,inner sep=1pt] (char) {#1};}}

\begin{document}
\RUNAUTHOR{Paccagnan and Gairing}

\RUNTITLE{In Congestion Games, Taxes Achieve Optimal Approximation}

\TITLE{In Congestion Games,\\ Taxes Achieve Optimal Approximation}

\ARTICLEAUTHORS{%
\AUTHOR{Dario Paccagnan}
\AFF{Department of Computing, Imperial College London, London, UK,
\EMAIL{d.paccagnan@imperial.ac.uk}} %
\AUTHOR{Martin Gairing}
\AFF{Department of Computer Science, University of Liverpool, Liverpool, UK, \EMAIL{m.gairing@liverpool.ac.uk}}
}

\ABSTRACT{
{
In this work, we consider the problem of minimising the social cost in atomic congestion games. 
For this problem, we provide
tight computational lower bounds along with taxation mechanisms yielding polynomial time algorithms with optimal approximation.}

Perhaps surprisingly, our results show that indirect interventions, in the form of efficiently computed taxation mechanisms, yield the {\it same} performance achievable by the best polynomial time algorithm, even when the latter has full control over the agents' actions.
 It follows that no other tractable approach geared at incentivizing desirable system behavior can improve upon this result, regardless of whether it is based on taxations, coordination mechanisms, information provision, or any other principle. In short: Judiciously chosen taxes achieve optimal approximation. 
Three technical contributions underpin this conclusion.
First, we show that computing the minimum \socialcost is \NP-hard to approximate within a given factor depending solely on the admissible resource costs.
Second, we design a tractable taxation mechanism whose efficiency (price of anarchy) matches this hardness factor, and thus is {worst-case} optimal.
As these results extend to coarse correlated equilibria, any no-regret algorithm inherits the same performances, allowing us to devise polynomial \mbox{time algorithms with optimal approximation.}

}%

\KEYWORDS{
congestion games, minimum social cost, hardness of approximation, optimal mechanism, polynomial time algorithms with optimal approximation, taxation mechanism, price of anarchy.
}

\maketitle
\vspace*{-11mm}
\section{Introduction}
\label{sec:intro}
{
Decision-making in the presence of congestion effects is a central topic in the operations research and game theory literature. %
Within this arena, the celebrated {\it congestion game model} proposed by \cite{rosenthal1973class} offers a fundamental framework to study resource allocation problems prone to congestion, finding fitting applications in transportation, telecommunications, scheduling, and many other disciplines. 
In an atomic congestion game, we are given a finite set of resources and a finite number of \players{}. Every \player{} is endowed with a set of feasible allocations, each corresponding to a subset of the set of resources. These subsets may, for example, correspond to machines where an \player{} wishes to schedule their jobs, or to paths connecting a given origin-destination in a graph. When selecting an allocation, each \player{} incurs an additive cost over the chosen resources, where the cost of a particular resource depends solely on the number of \players{} using it, i.e., on the congestion on that resource. At the system level, the quality of a joint allocation is measured through the \emph{\socialcost} by summing all the \players{} costs.%

Perhaps the most commonly encountered class of congestion games is that of {\it network} congestion games, where each feasible allocation corresponds to a path connecting a given pair of origin-destination nodes over a shared graph. This class of congestion games arises in the study of classical optimization problems in road-traffic, wireless-network, and minimum-power routing -- see \citep{rosenthal1973class,tekin2012atomic,AndrewsAZZ12} respectively.
\blfootnote{$^\dagger$ The authors would like to thank Rahul Chandan, Yiannis Giannakopoulos, Jason Marden, Tim Roughgarden, Maximilian Schiffer and Rahul Savani for valuable comments and suggestions. A preliminary version of this manuscript appeared at the 22nd ACM Conference on Economics \& Computation (EC'21).}
However, congestion games find interesting applications also beyond network-like settings for example in machine scheduling \citep{suri2007selfish}, distributed control \citep{marden2013distributed}, sensor allocation \citep{paccagnan2018distributed}, and factory production \citep{rosenthal1973class}.
\medskip
In this work we are concerned with a {\it system optimal} perspective on congestion games, complementary to much of the existing literature. Specifically, we study the problem of minimising the \socialcost.
We do so in two stages, depending on whether \players{} are modelled as strategic decision-makers or not.

\medskip

We  start by considering the case where \players{} are non-strategic, i.e., the case where we can directly control their allocations for the purpose of minimizing the social cost. This point of view stems from the observation that, in many circumstances, the planner is interested in a globally optimal solution to a congestion game where \players{} can be easily coordinated. Such problems arise commonly, e.g., in load balancing \citep{awerbuch1995load}, power routing \citep{AndrewsAZZ12}, or fleet management in mobility on-demand where a central operator is responsible for routing an entire fleet of cars \citep{zardini2021analysis}. 
However, in spite of the significant scientific interest, the fundamental nature of this model, and its ties to many related questions in operations research, little is known regarding the problem of minimizing the \socialcost in congestion games. %
Most notably, it remains unclear what solution quality  %
can be achieved by efficient (i.e. polynomial time) algorithms, and what are the inherent computational barriers to achieving such goal.
Specifically, no tight hardness of approximation is known to date,  %
nor a polynomial time algorithm with optimal approximation is available.
{\it Our manuscript completely settles these questions in a series of three contributions.} 

\medskip
Our first contribution shows that minimizing the \socialcost in congestion games is \NP-hard to approximate within an explicitly given factor depending solely on the class of resource costs. Prior to our work, no tight computational lower bound was known, even for the special case of linear resource costs. Our second and third contributions provide polynomial time algorithms with an approximation factor matching such hardness barrier. 
Taken together, these results certify that the algorithms we propose are {\it optimal} in the sense that they yield the best possible polynomial time approximation with respect to the {\it worst-case} instance. As such, they leave open the possibility that a different class of algorithms might provide a better performance on an instance-by-instance case.
However, we will see that the approximation we derive is near-optimal for many commonly encountered real-world settings, including the case where resource costs increase not too steeply (e.g., polynomials of low degree, logarithms), or where the uncongested and congested costs are of comparable magnitude (e.g., BPR functions). %
As the proposed approach applies even when agents are strategic, we now present such contributions within this more challenging setting.
\medskip

Decision-making in the presence of strategic \players{} dates back to the seminal works of \cite{morgenstern1953theory} and \cite{nash1950equilibrium}, which laid the foundation for the nascent field of game theory. 
Most notably, their work contributed to establishing the notion of {\it equilibrium}, e.g., Nash equilibrium, which describes the outcome arising from the interaction of strategic decision-makers aiming to minimimize their individual cost functions. Building over this initial work, researchers soon realized that equilibrium allocations often display a significant performance degradation when compared to the corresponding optimal solutions, leading to the definition of the so-called {\it price of anarchy}, commonly used to quantify the worst-case equilibrium efficiency \citep{koutsoupias1999worst}.}
A prototypical example of this issue is provided by road-traffic routing: When drivers choose routes that minimize their own travel time, the aggregate congestion could be much higher compared to that of a centrally-imposed routing. While improved performances could be attained if a coordinator was able to dictate the choices of each \player{}, imposing such control is often infeasible or impossible in these settings, with traffic routing providing just one illustration. Hence, different approaches including coordination mechanisms \citep{ChristodoulouKN09}, Stackelberg strategies {\color{magenta}\citep{BonifaciHS10}}, taxation mechanisms \citep{caragiannis2010taxes}, information provision {\color{magenta}\citep{BhaskarCKS16}}, and alternative methods for sharing resource costs {\color{magenta}\citep{GkatzelisKR16}} have been proposed as  \emph{indirect interventions} to influence the resulting equilibrium efficiency. Amongst these, taxation mechanisms, where \players{} pay an additional cost for utilizing a resource, have attracted significant attention, as witnessed by the growing literature on the topic, see \citep{harks2019pricing} and references therein. 
 {\color{magenta}
Yet, the problem of designing taxation mechanisms optimizing the equilibrium efficiency remains largely open, even for classical settings such as that of atomic congestion games \citep{harks2019pricing}. Additionally, it remains unclear, if, and to what extent, the use of indirect interventions reduces the best achievable performance, thus prompting a natural question:}
{\it Is there any performance degradation incurred when moving from centrally imposed decision-making to the use of \mbox{indirect interventions, such as taxation mechanisms?}} 

\medskip

Perhaps surprisingly, our work answers this question in the negative, showing that \emph{no} performance degradation arises when taxation mechanisms are judiciously designed. %
Specifically, our second contribution derives polynomially-computable taxation mechanisms ensuring that any corresponding equilibrium outcome has an efficiency (price of anarchy) matching the hardness factor derived in this paper, that is, the \emph{same} performance achievable by the best centralized polynomial time algorithm. %
 {\color{magenta} Before this work, optimal taxation mechanisms have been unavailable.}\footnote{
 \label{foot:worst-case}
 Throughout, we remark that the terms ``performance'', ``optimal'' and ``approximation'' are intended with respect to the achievable cost at the worst-case instance, as classical in the analysis of algorithms.}
 {While this result holds for the commonly employed notion of pure Nash equilibrium, it also extends to more general notions of equilibrium that can be computed more efficiently (e.g., correlated and coarse correlated equilibria). Moreover, it extends to no-regret online learning algorithms where \players{} update their allocations and achieve low regret, in the same spirit of the ``total price of anarchy'' pioneered by \cite{BlumHLR08}. Our third and final contribution builds upon this observation to derive polynomial time algorithms achieving the optimal approximation factor. When taken together, the upshot of our work can be summarized as follows:
\medskip
\begin{quoting}[leftmargin=1.35cm]
\emph{In congestion games, judiciously designed taxes achieve optimal approximation, and no other tractable intervention, whether based on coordination mechanisms, information provision, or any other principle can improve upon this result.}	
\end{quoting}}
\medskip
\noindent

\subsection*{\bf Significance and comparison with existing results}
Our work connects with a number of existing results, in addition to closing different open questions%
, as we briefly highlight next. %
{We refer to \cref{subsec:relatedwork} for a more detailed literature review, including a discussion on the continuous flow counterpart of congestion games (nonatomic model).}

The problem of determining computational lower bounds for minimizing the \systemcost in atomic congestion {games} has been initially studied by \cite{MeyersS12}. Since then, a number of works have pursued a similar line of research%
, though no tight bounds were known. 
Our work provides the best possible inapproximability results, completely settling the hardness question for congestion games with general \mbox{underlying resource costs.}

The study of taxation mechanisms has recently received growing attention, especially within the congestion games literature. %
Nevertheless, prior to this work, a methodology to design optimal taxation mechanisms (i.e., mechanisms minimizing the price of anarchy) was not known. %
For congestion games with polynomial costs, \cite{caragiannis2010taxes} (for linear) and \cite{BiloV19} (for polynomial) propose taxes whose efficiency can be quantified a priori. Both papers conjecture that their design is optimal. 
Our work resolves the problem of determining optimal taxes for the broader class of congestion games with non-decreasing and semi-convex resource costs, proving these conjectures as a special case.
\cite{Rough_barrier} studies how lower bounds on the price of anarchy can be derived from computational lower bounds. For congestion games with optimal taxes, we show \mbox{that such an approach does provide \emph{tight} bounds.}
{The best known approximation algorithm for minimizing the \systemcost in congestion games is due to \cite{Makarychev18}, and leverages a natural linear programming relaxation jointly with a randomized rounding scheme. While their result applies to the more general class of optimization problems with a ``diseconomy of scale'', the algorithms we propose here enjoy an equal or strictly better approximation ratio, and can not be further improved, owing to the matching hardness result provided.

\subsection{Congestion game model and taxation mechanisms}
In a congestion game we are given a set of \players{} $\{1,\dots,N\}$, 
and a set of resources $\mc{R}$. Each \player{} can choose a subset of the set of resources which she intends to use. We list all feasible choices for \player{} $i$ in the set $\mc{A}_i\subseteq 2^{\mc{R}}$. The cost for using each resource $r\in\mc{R}$ depends only on the total number of \players{} concurrently selecting that resource, and is denoted with $\ell_r :\mb{N}\rightarrow \mb{R}_{>0}$. Once all \players{} have made a choice $a_i\in\mc{A}_i$, each \player{} incurs a cost obtained by summing the costs of all resources she selected.  Finally, the \systemcost represents the sum of the resource \mbox{costs incurred by all \players{}}
\begin{equation}
\SC(a)=
\sum_{i=1}^N\sum_{r\in a_i} \ell_r(|a|_r),
\label{eq:systemcost}
\end{equation}
where $|a|_r$ denotes the number of \players{} selecting resource $r$ in allocation $a=(a_1,\dots,a_n)$. %
Given an instance $G$ %
{of} a congestion game, we denote with $\mincon$ the problem of globally minimizing the \socialcost in \eqref{eq:systemcost}. 
\medskip

\paragraph{Taxation mechanisms.} 
As self-interested decision making often deteriorates the system performance, taxation mechanisms have been proposed to ameliorate this issue. %
Formally, a taxation mechanism $T:G\times r\rightarrow \tau_r$ associates an instance $G$, and a resource $r\in\mc{R}$ to a taxation function $\tau_r:\mb{N}\rightarrow\mb{R}_{\ge0}$. Note that each taxation function $\tau_r$ is congestion-dependent, that is, it associates the number of \players{} in resource $r$ to the corresponding tax.%
\footnote{
\label{foot:congestion-dep-tolls}
We note that congestion-dependent taxation mechanisms are commonly studied in the literature and have been tested in the real world too, see e.g. \citep{axhausen2021empirical} and references therein. In the context of road traffic routing, increased connectivity and the advent of smart vehicles will further ease their implementation.} 
As a consequence, each \player{} $i$ experiences a cost factoring both the cost associated to the chosen resources, and the \mbox{tax, i.e.,}
\be
\label{eq:agentscost}
C_i(a)=\sum_{r\in a_i} [\ell_r(|a|_r)+\tau_r(|a|_r)].
\ee 
As typical in the literature, we measure the performance of a given taxation mechanism $T$ using the ratio between the \systemcost incurred at the worst-performing outcome, and the minimum \systemcost. Given the self-interested nature of the \players{}, an outcome is commonly described by any of the following classical equilibrium notions: pure or mixed Nash equilibria, {correlated} or coarse correlated equilibria.\footnote{It is worth observing that each class of equilibria appearing in this list is a superset of the previous \citep{roughgarden2015intrinsic}. Therefore, since pure Nash equilibria are guaranteed to exist even in congestion games where taxes are used (they are, in fact, potential games), all mentioned equilibrium's sets are non-empty and thus the notion of \emph{price of anarchy} introduced in \eqref{eq:poadef} \mbox{well defined.}}
When considering pure Nash equilibria, for example, the performance of a taxation mechanism $T$ over the class of congestion games $\mc{G}$ is gauged using the notion of \emph{price of anarchy} originally introduced by \cite{koutsoupias1999worst}, i.e.,  
\begin{equation}
\poa(T) = \sup_{G\in\mc{G}} \frac{\PNEcost(G,T)}{\mincost(G)},
\label{eq:poadef}
\end{equation}
where $\mincost(G)=\min_{a\in\mc{A}} SC(a)$ is the minimum \socialcost for instance $G$, and $\PNEcost(G,T)$ denotes the highest \socialcost at a Nash equilibrium obtained when employing the mechanism $T$ on the game $G$. By definition, $\poa(T)\ge1$ and the lower the price of anarchy, the better performance $T$ guarantees. While it is possible to define the notion of price of anarchy for each and every equilibrium class mentioned, we do not pursue this direction, as we will show that all these metrics coincide within our setting. Thus, we will simply use $\poa(T)$ to refer to the efficiency values of \emph{any} and \emph{all} these equilibrium classes. Finally, we observe that, while taxation mechanisms influence the \players{}' perceived cost, they do not impact the \mbox{expression of the \systemcost, which is still of the form in \eqref{eq:systemcost}.}

\medskip
{
All our results hold for the widely studied case of non-decreasing semi-convex resource costs, corresponding to all resource costs satisfying the following assumption.
 
\begin{assumption}
\label{ass:assumption}
{%
The function $\ell:\mb{N}\rightarrow \mb{R}_{>0}$ is non-decreasing and semi-convex.\footnote{$\ell:\mb{N}\rightarrow \mb{R}_{\mygezero}$ is semi-convex if $x\ell(x)$ is convex, i.e., $(x+1)\ell(x+1)-x\ell(x)\ge x\ell(x)-(x-1)\ell(x-1)$ $\forall x\in\mb{N}, x\ge2$.}}
\end{assumption}

\paragraph{Notation.}
\label{par:notation}
In our work $\mathbb{N}$, $\mathbb{N}_0$, $\mathbb{R}$, $\mathbb{R}_{\ge0}$,  $\mathbb{R}_{>0}$ denote the sets of natural numbers, natural numbers including zero, real numbers, non-negative real numbers, and positive real numbers. Further, $\text{Poi}(x)$ denotes a Poisson distribution with parameter $x\in\mb{R}_{>0}$.
{In the remainder of the manuscript, we extend the domain of definition of all resource costs from $\mb{N}$ to $\mb{N}_0$ by setting their value to zero. This is without loss of generality. Indeed, the value of $\ell_r(0)$, does not play any role with respect to all quantities we have introduced thus far, e.g., the \systemcost in \eqref{eq:systemcost}, %
since $\ell_r(|a|_r)$ is always evaluated for $|a|_r\ge 1$. %
However, doing so will allow us to simplify notation.}\footnote{\label{foot:simplify_b(0)}
For example, it will allow us to write the numerator of \eqref{eq:approxfact} compactly as $\mb{E}_{P\sim\text{{Poi}}(x)} [P\ell(P)]=\sum_{k=0}^\infty k\ell(k) {x^k e^{-x}}/{k!}$, as opposed to $\sum_{k=1}^\infty k\ell(k) {x^k e^{-x}}/{k!}$ when $\ell(0)$ is undefined. Note here the different extremes of summation.} }

\subsection{Our contributions}
\label{subsec:contributions}
The resounding message contained in this work can be summarized as follows:
In congestion games, optimally designed taxation mechanisms can be tractably computed, and achieve the same performance of the best centralized polynomial time algorithm. {Further, such approximation is near-optimal in a variety of commonly-encountered settings.} Three technical contributions substantiate this claim, and are further discussed in the ensuing paragraphs:%
\medskip
\begin{itemize}[leftmargin=5.5mm]
{\it
\item[ i)] We prove a tight \NP-hardness result for minimizing the \socialcost.
\item[ ii)] We design a tractable taxation mechanism whose price of \mbox{anarchy matches the hardness factor.}
\item[ iii)] 
{We obtain a polynomial time algorithm with the best possible approximation ratio for the social cost. We do so combining the previous results with existing algorithms (e.g., no-regret dynamics).}}
\end{itemize}
\medskip
\noindent
{Results {\it ii)} and {\it iii)} extend to network congestion games, as we discuss in the conclusions.}
\medskip
\noindent{\bf Inapproximability of minimum \socialcost.} Our first contribution is concerned with determining tight inapproximability results for the problem of minimizing \socialcost in congestion games. %
Our hardness result applies already in the setup where all resources feature the same cost.%
\medskip
\begin{theorem}
\label{thm:hardness}
{Let $\ell:\mb{N}\rightarrow\mb{R}_{\mygezero}$ satisfing Assumption~\ref{ass:assumption} be given.}
In congestion games where all resources feature the same cost $\ell$,
\mincon~is \NP-hard to approximate within any factor smaller than 
\be
\rho_\ell=\sup_{x\in\mb{N}}\frac{\mb{E}_{P\sim\text{\emph{Poi}}(x)} [P\ell(P)]}{x\ell(x)},
\label{eq:approxfact}
\ee
{where we define $\ell(0)=0$.}
{If $\rho_\ell=\infty$, then \mincon~is \NP-hard to approximate within any finite factor.}
\end{theorem}
\medskip
\noindent
{Naturally, \cref{thm:hardness} applies directly to richer classes of congestion games, whereby resource costs can differ. {For example, if resource costs belong to a set of functions $\mathscr{L}$, 
 then \mincon~is \NP-hard to approximate within any factor smaller than that produced by the worst function in $\mathscr{L}$, i.e., $\sup_{\ell\in\mathscr{L}}\rho_{\ell}$. } 
 As a special case, we obtain hardness results for the thoroughly studied class of polynomial congestion games with maximum degree $d$. %
 In this case, the highest degree monomial $x^d$ determines the worst factor (see \cref{subsec:bell-number}), which reduces to the $(d+1)$'st Bell number, as summarized in the following statement. 
\medskip
 \begin{corollary}
 \label{cor:polyCG}
 In congestion games with resource costs obtained by non-negative combinations~of $1$, $\dots$, $x^d$, \mbox{\mincon~is \NP-hard to approximate within a factor smaller than the $\!(d\!+\!1)\!$'st Bell number.}
 \end{corollary}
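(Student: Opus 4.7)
My plan is to obtain Corollary 1 as a specialization of Theorem 1, the only nontrivial step being a closed-form evaluation of $\rho_{x^d}$. The single-cost function $b(x) = x^d$ is a (trivial) non-negative combination of $\{1, x, \dots, x^d\}$, so instances of the single-cost family with this $b$ belong to the class targeted by the corollary, and any inapproximability result for the former transfers. Before applying Theorem 1 I would verify its hypotheses on $b$: non-decreasing is immediate, and semi-convexity reduces to convexity of $x \cdot x^d = x^{d+1}$ on $\mb{N}$, which follows from elementary finite-difference considerations. Theorem 1 then implies that \mincon{} is \NP-hard to approximate within any factor strictly below
\[
\rho_{x^d} = \sup_{x \in \mb{N}} \frac{\mb{E}_{P \sim \text{Poi}(x)}[P^{d+1}]}{x^{d+1}}.
\]

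The remaining task is to show $\rho_{x^d} = B_{d+1}$, the $(d+1)$'st Bell number. I would invoke the Touchard identity expressing the raw moments of a Poisson random variable in terms of Stirling numbers of the second kind, namely $\mb{E}_{P \sim \text{Poi}(x)}[P^{n}] = \sum_{k=0}^{n} S(n,k)\, x^{k}$. Substituting $n = d+1$ and dividing by $x^{d+1}$ yields the Laurent polynomial $\sum_{k=0}^{d+1} S(d+1,k)\, x^{k-(d+1)}$. Since every Stirling coefficient $S(d+1,k)$ is non-negative and every exponent $k-(d+1)$ is non-positive, the expression is non-increasing in $x$ on the integers $x \ge 1$. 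Its supremum is therefore attained at $x = 1$ and equals $\sum_{k=0}^{d+1} S(d+1,k) = B_{d+1}$ by the standard set-partition definition of the Bell numbers (equivalently, Dobinski's formula evaluated at $x = 1$).

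I do not anticipate a meaningful obstacle: the reduction is immediate, and the remainder is a well-known moment identity combined with a one-line monotonicity argument. The only items I would double-check are (i) the edge case $d = 0$, which gives the vacuous bound $B_1 = 1$, and (ii) consistency with the convention $b(0) = 0$ stated alongside Theorem 1, which is compatible with $0^d = 0$ whenever $d \ge 1$. A purely cosmetic alternative would be to apply Theorem 1 separately to each monomial $x^j$ with $j \in \{0,\dots,d\}$, obtain hardness factor $\max_j \rho_{x^j}$ via the general linear-combination remark made right after Theorem 1, and then observe that the Bell sequence $B_{j+1}$ is monotone increasing so the maximum is $B_{d+1}$; but the single-cost shortcut above is strictly simpler.
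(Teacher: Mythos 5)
Your proposal is correct and follows essentially the same route as the paper: both reduce to computing $\rho_{x^d}$ via the Stirling-number (Touchard) expansion of the $(d+1)$'st Poisson moment, observe that the resulting Laurent polynomial in $x$ is non-increasing on $\mb{N}$ so the supremum sits at $x=1$, and sum the Stirling numbers to obtain $\mc{B}(d+1)$. The paper likewise notes that $\rho_{x^j}$ is increasing in $j$, so the highest-degree monomial dominates the maximum over the combination class.
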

 
 {%
 We note that Bell numbers grow as function of the degree $d$, so that the corresponding inapproximability factor also grows with $d$. %
 However, for many problems of interest, resource costs feature further properties that can be exploited to reduce such factor. This is interesting, since the algorithms we propose will provide approximations matching this factor. For example, in road-traffic routing, the commonly employed BPR functions take the form $\ell_r(x)=\mya_rx^4+\myb_r$ where $\mya_r\ll \myb_r$ for all resources, see for example the commonly employed Sioux Fall network and other instances from \citep{TransNet}. In this context, a direct application of \cref{thm:hardness} provides inapproximability factors significantly closer to one. We elucidate this point in the following corollary where we look at congestion games with affine resource costs, as this allows to present a concise analytical result. However, a similar result holds for BPR functions, as demonstrated in Figure \ref{fig:inapprox} (left).

\begin{corollary}
\label{cor:affine_CG_refined}
	In congestion games with affine resource costs of the form $\ell_r(x)=\mya_r x+ \myb_r$, where $\max_r {\mya_r}/({\mya_r+\myb_r})\le q$, \mincon~is \NP-hard to approximate within a factor smaller than $1+ q$.
\end{corollary}

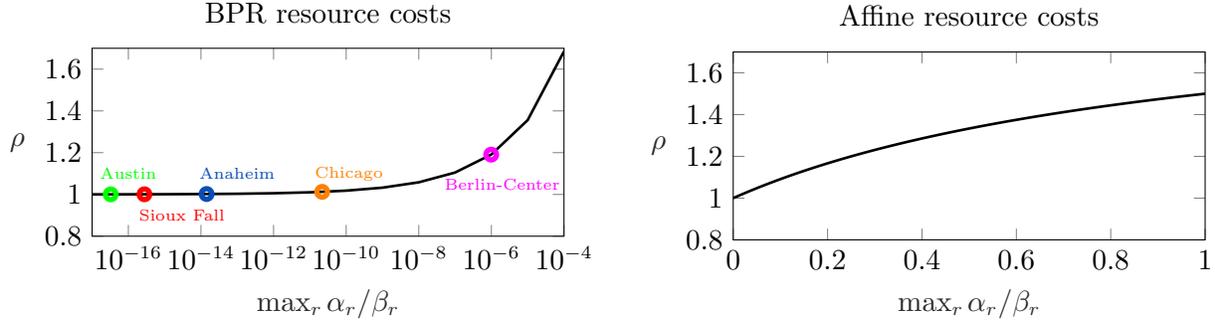
\begin{figure}[t!]
	\newlength\figureheight 
	\newlength\figurewidth 
	\setlength\figureheight{2.5cm} 
	\setlength\figurewidth{.38\linewidth} 
	% This file was created by matlab2tikz.
%
%The latest updates can be retrieved from
%  http://www.mathworks.com/matlabcentral/fileexchange/22022-matlab2tikz-matlab2tikz
%where you can also make suggestions and rate matlab2tikz.
%
\definecolor{mycolor1}{rgb}{1.00000,0.00000,1.00000}%
\definecolor{mycolor2}{rgb}{1.00000,0.500000,0.00000}%
\begin{tikzpicture}

\begin{axis}[%
width=\figurewidth,
height=\figureheight,
title={BPR resource costs},
at={(1.011in,0.642in)},
scale only axis,
xmode=log,
xmin=1e-17,
xmax=0.0001,
xminorticks=true,
xlabel style={font=\color{white!15!black}},
xlabel={$\max_r \mya_r/\myb_r$},
ymin=0.8,
ymax=1.7,
ylabel style={font=\color{white!15!black},rotate=-90},
ylabel={$\rho$},
axis background/.style={fill=white}
]

\node at (1e-16,1.1) {\tiny \color{green} Austin};
\node at (3e-15,0.9) {\tiny \color{red} Sioux Fall};
\node at (1e-13,1.1) {\tiny \color{blue} Anaheim};

\node at (2e-6,1.05) {\tiny \color{mycolor1} Berlin-Center};
\node at (1.2e-10,1.1) {\tiny \color{mycolor2} Chicago};

\addplot [color=black, line width=1.0pt, forget plot]
  table[row sep=crcr]{%
1e-18	1.00018022168164\\
1e-17	1.00032049948883\\
1e-16	1.00056998502573\\
1e-15	1.00101374248973\\
1e-14	1.00180319135654\\
1e-13	1.00320807709397\\
1e-12	1.00570960103355\\
1e-11	1.01016827908719\\
1e-10	1.01812958400838\\
1e-09	1.03239030123785\\
1e-08	1.05807834116144\\
1e-07	1.10480925415307\\
1e-06	1.19129553571172\\
1e-05	1.35604634570781\\
0.0001	1.68522254034357\\
};
\addplot [color=red, line width=2.0pt, only marks, mark=o, mark options={solid, red}, forget plot]
  table[row sep=crcr]{%
2.76999999920885e-16	1.00073537238728\\
};
\node[right, align=left, font=\color{red}]
at (axis cs:0,1.031) {Sioux Falls};
\addplot [color=blue, line width=2.0pt, only marks, mark=o, mark options={solid, blue}, forget plot]
  table[row sep=crcr]{%
1.42889803383631e-14	1.00197158936726\\
};
\node[right, align=left, font=\color{blue}]
at (axis cs:0,1.032) {Anaheim};
\addplot [color=green, line width=2.0pt, only marks, mark=o, mark options={solid, green}, forget plot]
  table[row sep=crcr]{%
3.26662634436557e-17	1.00043089175349\\
};
\node[right, align=left, font=\color{green}]
at (axis cs:0,1.03) {Austin};
\addplot [color=mycolor1, line width=2.0pt, only marks, mark=o, mark options={solid, mycolor1}, forget plot]
  table[row sep=crcr]{%
9.8808847473199e-07	1.19069464953262\\
};
\node[right, align=left, font=\color{mycolor1}]
at (axis cs:0,1.211) {Berlin};
\addplot [color=mycolor2, line width=2.0pt, only marks, mark=o, mark options={solid, mycolor2}, forget plot]
  table[row sep=crcr]{%
2.18032536901292e-11	1.01236497363008\\
};
\node[right, align=left, font=\color{mycolor2}]
at (axis cs:0,1.042) {Chicago};
\end{axis}

%\begin{axis}[%
%width=7.778in,
%height=5.833in,
%at={(0in,0in)},
%scale only axis,
%xmin=0,
%xmax=1,
%ymin=0,
%ymax=1,
%axis line style={draw=none},
%ticks=none,
%axis x line*=bottom,
%axis y line*=left
%]
%\end{axis}
\end{tikzpicture}%
	\quad
	% This file was created by matlab2tikz.
%
%The latest updates can be retrieved from
%  http://www.mathworks.com/matlabcentral/fileexchange/22022-matlab2tikz-matlab2tikz
%where you can also make suggestions and rate matlab2tikz.
%
\begin{tikzpicture}

\begin{axis}[%
width=\figurewidth,
height=\figureheight,
title={Affine resource costs},
at={(1.011in,0.642in)},
scale only axis,
xmin=0,
xmax=1,
xlabel style={font=\color{white!15!black}},
xlabel={$\max_r \mya_r/\myb_r$},
ymin=0.8,
ymax=1.7,
ylabel style={font=\color{white!15!black},rotate={-90}},
ylabel={$\rho$},
axis background/.style={fill=white}
]
\addplot [color=black, line width=1.0pt, forget plot]
  table[row sep=crcr]{%
0	1\\
0.01	1.00990099009901\\
0.02	1.01960784313725\\
0.03	1.02912621359223\\
0.04	1.03846153846154\\
0.05	1.04761904761905\\
0.06	1.05660377358491\\
0.07	1.06542056074766\\
0.08	1.07407407407407\\
0.09	1.08256880733945\\
0.1	1.09090909090909\\
0.11	1.0990990990991\\
0.12	1.10714285714286\\
0.13	1.11504424778761\\
0.14	1.12280701754386\\
0.15	1.1304347826087\\
0.16	1.13793103448276\\
0.17	1.14529914529915\\
0.18	1.15254237288136\\
0.19	1.15966386554622\\
0.2	1.16666666666667\\
0.21	1.17355371900826\\
0.22	1.18032786885246\\
0.23	1.1869918699187\\
0.24	1.19354838709677\\
0.25	1.2\\
0.26	1.20634920634921\\
0.27	1.21259842519685\\
0.28	1.21875\\
0.29	1.22480620155039\\
0.3	1.23076923076923\\
0.31	1.23664122137405\\
0.32	1.24242424242424\\
0.33	1.24812030075188\\
0.34	1.25373134328358\\
0.35	1.25925925925926\\
0.36	1.26470588235294\\
0.37	1.27007299270073\\
0.38	1.27536231884058\\
0.39	1.28057553956835\\
0.4	1.28571428571429\\
0.41	1.29078014184397\\
0.42	1.29577464788732\\
0.43	1.3006993006993\\
0.44	1.30555555555556\\
0.45	1.31034482758621\\
0.46	1.31506849315068\\
0.47	1.31972789115646\\
0.48	1.32432432432432\\
0.49	1.32885906040268\\
0.5	1.33333333333333\\
0.51	1.33774834437086\\
0.52	1.34210526315789\\
0.53	1.34640522875817\\
0.54	1.35064935064935\\
0.55	1.35483870967742\\
0.56	1.35897435897436\\
0.57	1.36305732484076\\
0.58	1.36708860759494\\
0.59	1.37106918238994\\
0.6	1.375\\
0.61	1.37888198757764\\
0.62	1.38271604938272\\
0.63	1.38650306748466\\
0.64	1.39024390243902\\
0.65	1.39393939393939\\
0.66	1.39759036144578\\
0.67	1.40119760479042\\
0.68	1.4047619047619\\
0.69	1.40828402366864\\
0.7	1.41176470588235\\
0.71	1.41520467836257\\
0.72	1.41860465116279\\
0.73	1.42196531791908\\
0.74	1.42528735632184\\
0.75	1.42857142857143\\
0.76	1.43181818181818\\
0.77	1.43502824858757\\
0.78	1.43820224719101\\
0.79	1.4413407821229\\
0.8	1.44444444444444\\
0.81	1.4475138121547\\
0.82	1.45054945054945\\
0.83	1.45355191256831\\
0.84	1.45652173913043\\
0.85	1.45945945945946\\
0.86	1.46236559139785\\
0.87	1.46524064171123\\
0.88	1.46808510638298\\
0.89	1.47089947089947\\
0.9	1.47368421052632\\
0.91	1.47643979057592\\
0.92	1.47916666666667\\
0.93	1.48186528497409\\
0.94	1.48453608247423\\
0.95	1.48717948717949\\
0.96	1.48979591836735\\
0.97	1.49238578680203\\
0.98	1.4949494949495\\
0.99	1.49748743718593\\
1	1.5\\
};
\end{axis}
\end{tikzpicture}%
	\caption{Inapproximability factors for congestion games with BPR resource costs $\ell_r(x)=\mya_r x^4 +\myb_r$ (left) and affine resource costs $\ell_r(x)=\mya_r x +\myb_r$ (right), parametrized by $\max_r \mya_r/\myb_r$. Commonly encountered instances, e.g., those taken from \citep{TransNet} feature $\mya_r\ll \myb_r$ yielding an inapproximability factor close to one, see markers on the left panel.}
	\label{fig:inapprox}
\end{figure}

The latter result shows that the inapproximability factor depends only on the ratios between the coefficients $\mya_r$ and $\myb_r$. To see this, denote $\max_r {\mya_r}/{\myb_r}= \kappa$, and observe that $\max_r {\mya_r}/({\mya_r+\myb_r})= {\kappa}/({\kappa+1})$ when $\myb_r\neq0$. Consequently, the  inapproximability factor $1+q$ can be rewritten more conveniently as $1+{\kappa}/({\kappa+1})$, see \cref{fig:inapprox} (right) for a plot.
In the regime where $\mya_r\ll \myb_r$ for all resources, we have $\kappa\ll 1$, so that the inapproximability factor is close to one. When this ratio grows, the corresponding factor also grows up until it reaches its maximum when $\myb_r\rightarrow 0$ for some resource. In other words, the inapproximability factor depends on the ratio between the speed at which congestion translates into cost (here $\mya_r$) and the uncongested cost (here $\myb_r$). Interestingly, in many setting the ratio ${\mya_r}/{\myb_r}\ll 1$ , thus ensuring that the forthcoming algorithms will provide near-optimal approximation.
}
 
 \medskip

{\color{magenta}
Regarding our techniques, the computational lower bounds are shown by reducing the problem of minimizing the \systemcost in congestion games from the \emph{gap label cover} problem implicitly defined in \cite{Feige98}. A central tool we leverage in our reduction is a gadget called \emph{partitioning system}, generalizing that in \cite{Feige98}. The challenge in defining this object stems from the fact that, in congestion games, the cost experienced over each resource depends on the actual number of \players{} selecting that resource, contrary to \cite{Feige98} where the welfare accrued depends solely on whether that resource is selected or not.
The expression of the hardness factor \eqref{eq:approxfact} falls from the very definition of such object, which must be duplicated multiple times and carefully arranged in our construction to ensure hardness of the underlying instance. {
Conceptually, partitioning systems are constructed so that, in our reduction, \players{} have two types of allocations: either they coordinate themselves on an allocation with low \socialcost, or they opt for an allocation with high social cost. The inapproximability factor in \eqref{eq:approxfact} is precisely the ratio between these ``high'' and ``low'' social costs, see \cref{subsubsec:partsys}.
} 
Partitioning systems have been recently applied also by \cite{DudyczMMS20} and \cite{barman2021tight} in the context of approval voting, and for a generalization of the maximum coverage problem. To the best of our knowledge, our work is the first that employs such gadgets to prove hardness of cost minimization problems. 
}

\medskip
\noindent{\bf Taxes achieve optimal approximation. }
Our second contribution provides a technique to efficiently design taxation mechanisms {\color{magenta}whose efficiency (price of anarchy)} matches the hardness result. The following statement is a succinct variant of that in \cref{thm:opttaxes} where we also give an expression for the taxation mechanism.%
\begin{theorem}
\label{thm:taxes_poa}
Consider the class of congestion games where each resource cost belongs to a set of functions $\mathscr{L}$ {satisfying Assumption~\ref{ass:assumption}}. Then, for any given $\varepsilon>0$, it is possible to efficiently compute a taxation mechanism whose price of anarchy is upper bounded by $\sup_{\ell\in\mathscr{L}} \rho_{\ell}+\varepsilon$. The result holds for pure/mixed Nash and \mbox{correlated/coarse correlated equilibria.}
\end{theorem}
\medskip
\noindent
The extension to coarse correlated equilibria is significant as it gives performance bounds that apply not only to Nash equilibria, but also whenever \players{} revise their action and achieve low regret. This imposes much weaker assumptions on both the game and its participants' behaviour \citep{roughgarden2015intrinsic}.
Whilst the theorem applies to a broad class of resource costs, when specialized to polynomial congestion games of degree $d$, it allows to efficiently design taxation mechanisms whose price of anarchy equals the $(d+1)$'st Bell number plus $\varepsilon$, for any \mbox{arbitrarily small choice of $\varepsilon>0$.} 

{\color{magenta} 
We derive taxes whose efficiency matches the hardness factor leveraging two chief ingredients: a parametrized class of taxation mechanisms satisfying a key recursion, and a suitably defined convex optimization program. The convex optimization problem we consider corresponds to a modification of the original \mincon, whereby we relax the integrality constraints and replace the cost $x\ell_r(x)$ produced by each resource with $\mb{E}_{P\sim\text{Poi}(x)}[P\ell_r(P)]$, see \eqref{eq:cvxprogram}. The solution vector of this program (which is provably convex) is used as set of parameters for the class of mechanisms previously defined. The performance bound on the price of anarchy is finally shown through a smoothness-like approach, where we leverage both the expression of the mechanisms and the solution vector of the convex program. 
}

\medskip

\noindent{\bf Tight polynomial algorithms.} Since the result in \cref{thm:taxes_poa} holds also for correlated/coarse correlated equilibria, we can leverage existing polynomial time algorithms to compute such equilibria, and inherit an approximation ratio matching the corresponding price of anarchy.
The following statement summarizes this result, while the ensuing discussion provides two possible approaches to do so. %
We remark that \cref{cor:polyalgo} is a direct consequence of \cref{thm:taxes_poa} and polynomial computability of correlated equilibria \citep{JiangL15}. The fact that the resulting approximation factor we obtain always matches or strictly improves upon that of \cite{Makarychev18} is shown in \cref{subsec:comparison-sviri}.
\medskip
\begin{corollary}
\label{cor:polyalgo} 
	Consider congestion games where each resource cost belongs to a set of functions $\mathscr{L}$ {satisfying Assumption~\ref{ass:assumption}}.
	Then, for any $\varepsilon>0$, there exists a polynomial time algorithm computing an allocation with cost lower than \mbox{$(\sup_{\ell\in\mathscr{L}} \rho_{\ell} +\varepsilon)\cdot \min_{a\in\mc{A}}\SC(a)$.}
\end{corollary}
\medskip

The approximation ratio presented in \cref{cor:polyalgo} can be achieved, for example, as follows. Given a desired tolerance $\varepsilon>0$, we design a taxation mechanisms ensuring a price of anarchy of $\sup_{\ell\in\mathscr{L}} \rho_{\ell} +\varepsilon/2$, which can be done in polynomial time thanks to \cref{thm:taxes_poa}. We use such taxation mechanisms, and compute an exact correlated equilibrium in polynomial time leveraging the result of \cite{JiangL15}, who propose a variation of the Ellipsoid Against Hope algorithm from \cite{papadimitriou2008computing}. Remarkably, \cite{JiangL15} guarantees that the resulting correlated equilibrium has polynomial-size support, i.e., it places non-zero probability only over a polynomial number of pure strategy profiles. Hence, we compute a correlated equilibrium and enumerate all pure strategy profiles in its support, identifying with $a^*$ that with lowest cost.
Since the price of anarchy bounds of \cref{thm:taxes_poa} hold for correlated equilibria, the pure strategy profile $a^*$ inherits a matching \mbox{(or better) approximation ratio.}

Alternatively, one can employ the same taxation mechanism as in the above, and let \players{} simultaneously revise their action for $\T$ rounds employing a no-regret algorithm. Well-known families of such algorithms include {\it Multiplicative-Weights}, and {\it Follow the Perturbed}/{\it Regularized Leader} whereby the average regret decays to an arbitrary $\varepsilon$ in a polynomial %
 number of rounds, see \cite{cesa2006prediction,KalaiV05}. 
We keep track of the pure strategy profile $a^*_{\T}$ with the lowest \systemcost encountered during the $\T$ rounds of any such algorithm t. Owing to \cite[Thm 3.3]{roughgarden2015intrinsic}, its approximation ratio is upper bounded by the corresponding price of anarchy plus an error term that goes to zero with the same rate of the average regret. Waiting for polynomially \mbox{many rounds suffices to reduce the error as desired.}

\subsection{Related work}
\label{subsec:relatedwork}
As our work provides tight computational lower bounds, optimal taxation mechanisms, and polynomial time algorithms with the best possible approximation, we review the relevant literature connected with these three areas in the ensuing paragraphs. We conclude comparing our results with those available for the continuous flow counterpart of congestion games (nonatomic model).

\medskip
\paragraph{\bf Computational lower bounds.}
The study of computational lower bounds for minimizing the \systemcost in congestion games has been pioneered by  \cite{MeyersS12}, though remarkable precursors include \cite{ChakrabartyMN05}, as well as \cite{BlumrosenD06} who considered a notably different model whereby \player{}-specific cost functions are utilized. %
Relative to the classical model of congestion games with convex non-decreasing resource costs, Meyers and Schulz show that minimizing the \systemcost is strongly \NP-hard, and it is hard to approximate within any finite factor, unless \Pclass=\NP. Identical results are shown for network congestion games. Whilst this might feel as a contradiction of our results, it is worth noting that their analysis allows for resource costs to be adversarily selected amongst any convex non-decreasing function%
. On the contrary, our result can be thought of as a refined version of theirs, whereby the computational lower bound we derive is parametrized by the class of admissible resource costs. Naturally, we recover their inapproximability result when resource costs can be arbitrarily selected amongst convex non-decreasing functions.\footnote{For example $\mincon$ is \NP-hard to approximate within any finite factor when $b(x)=e^x$. This is an immediate consequence of \cref{thm:hardness} and the fact that $\rho_b=\infty$ for exponentially increasing resource costs.}%

Motivated by the possibility to translate computational lower bounds to lower bounds on the price of anarchy, \cite{Rough_barrier} also studied this problem. %
Relative to polynomial resource costs of maximum degree $d$ and non-negative coefficients, he showed that minimizing the \systemcost is inapproximable within any factor smaller than $(\beta d)^{d/2}$, for some constant $\beta>0$. 
In this setting, even without taxes, equilibria with much better {performance} are guaranteed to exist. In particular, the \emph{price of stability} (measuring the quality of the best-performing equilibrium) is known to grow only linearly with the degree $d$ \citep{CG16}.  While coordinating the \players{} to one such \emph{good} equilibrium is highly desirable, our hardness result implies that this cannot be achieved in polynomial time.

Spurred by the inapproximability results of \cite{MeyersS12}, and \cite{Rough_barrier}, a number of works have focused on restricting the allowable class of problems: \cite{PiaFM17} study totally unimodular congestion games and show \NP-hardness for the asymmetric case;  \cite{Castiglioni2020signaling} show \NP-hardness even in singleton congestion games with affine resource costs. Similar questions have been explored for online versions of the problem by \cite{KlimmST19}. %
\medskip

\paragraph{\bf Taxation mechanisms.}
Different approaches, such as coordination mechanisms \citep{ChristodoulouKN09}, Stackelberg strategies \citep{Fotakis10a}, taxation mechanisms \citep{caragiannis2010taxes}, signalling \citep{BhaskarCKS16}, cost-sharing strategies \citep{GairingKK20}%
, and many more, have been proposed to cope with the performance degradation associated to selfish decision making.
Amongst them, taxation mechanism have attracted significant attention thanks to their ability to indirectly influence the resulting system performance.  While the study of taxation mechanisms in road-traffic networks was initiated by \cite{pigou}, who utilized a continuous flow model (nonatomic congestion games), the design of taxation mechanisms in (atomic) congestion games was pioneered much more recently by \cite{caragiannis2010taxes}. In this respect, \cite{caragiannis2010taxes} and many of the subsequent works, build on the solid theoretical ground developed in the years subsequent to the definition of the price of anarchy \citep{koutsoupias1999worst}, including exact knowledge of the price of anarchy in congestion games with linear \citep{christodoulou2005price,awerbuch2005price} and polynomial \citep{aland2006exact} resource costs, the advent of the smoothness framework \citep{roughgarden2015intrinsic}, as well as primal dual approaches \citep{bilo2018unifying,chandan19optimaljournal}. 

While these results provide us with a strong theory to quantify the price of anarchy, prior to this work, the design of optimal taxation {mechanisms} (i.e., taxation mechanisms minimizing the price of anarchy) has been an open question even when restricting attention to linear resource costs.
Most notably, \cite{caragiannis2010taxes} considers linear congestion games and designs taxation mechanisms achieving a price of anarchy of $2$ for mixed Nash equilibria. More recently, \cite{BiloV19} extend their results to polynomial congestion games of degree $d$ achieving a price of anarchy (for coarse correlated equilibria) equal to the $(d+1)$'st Bell number. %
Our work resolves the problem of designing optimal taxation mechanisms for more general congestion games with semi-convex non-decreasing resource costs, and, as a special case, 
 shows that the mechanisms proposed for linear \citep{caragiannis2010taxes} and polynomial \citep{BiloV19} resource costs are optimal, as conjectured by the authors.

Perhaps closest in spirit to our work, is the recent result by \cite{paccagnan2019incentivizing}, whereby the authors leverage a tractable linear programming formulation to design optimal taxation mechanisms that utilize solely \emph{local} information. Naturally, the corresponding values of the optimal price of anarchy they achieve are inferior to ours (here, we design optimal taxation mechanisms without any restrictions on what type of information we use), though the efficiency values derived therein are remarkably close to the optimal values obtained here. For example, for affine (resp. quadratic) congestion games, they achieve an optimal price of anarchy of 2.012 (resp. 5.101), to be compared with a value of 2 (resp. 5). This suggests that restricting the attention to taxation mechanisms utilizing solely local information is sufficient to match almost exactly the performance of %
the best polynomial time algorithm.

We conclude observing that similar questions have been considered for variants of the classical setup studied here. For example,
\cite{fotakis2008cost} focuses on symmetric network congestion games, \cite{HarksKKM15, HoeferOS08, JelinekKS14} focus on taxing a subset of the resources. 
Finally, we remark that optimal taxation mechanisms can be easily derived for \emph{non-atomic} congestion games, where \players{} have only an infinitesimal impact on the congestion. 
In this setting, it is known that marginal cost taxes incentivize optimal behaviour \citep{pigou}. On the contrary, in the atomic regime the same marginal cost taxes do not improve \mbox{- and instead significantly deteriorate - the resulting system efficiency \citep{paccagnan2019incentivizing}.}
\medskip
\paragraph{\bf Approximation algorithms.}
A number of polynomial time algorithms have been proposed for approximating the minimum \socialcost in congestion games and their network counterpart as discussed in \cite{Makarychev18, AndrewsAZZ12, HarksOV16} and references therein.
The best known approximation is due to \cite{Makarychev18} who use randomization to round the solution of a natural linear programming relaxation. They provide a general expression for the resulting approximation factor as a function of the allowable resource costs.
Related works have also considered modifications of the classical setup: \cite{HarksOV16} provide approximation algorithms for polymatroid congestion games, whereas \cite{KumarMPS09} considers scheduling problems on unrelated machines. While the result of \cite{Makarychev18} holds for the more general class of optimization problems with ``diseconomy of scale'', the approximation ratios we obtain here always match or strictly improve upon theirs.
\medskip
{
\label{page:nonatomic-comparison}
\paragraph{\bf Comparison with nonatomic congestion games.}
The continuous flow counterpart of congestion games, commonly referred to as the {\it nonatomic} model, and its corresponding Wardrop equilibrium were proposed by \cite{wardrop1952road} for his studies on road-traffic routing. %
 The two models are identical in spirit, except that the nonatomic variant postulates the presence of infinitely many \players{}, each controlling an infinitesimal portion of the traffic flow. Albeit apparently minor, this modification has significant impact both on the applicability %
 of the model and on the corresponding analysis. 
 
 Regarding the model's applicability, %
 nonatomic congestion games require each \player{}'s decision to have negligible impact on the congestion levels. While this simplifying assumption is acceptable in many settings, e.g., road-traffic routing, there are many other congestion-prone systems for which this is simply not the case, e.g., scheduling, power routing, sensor allocation and many more.\footnote{Already within the road-traffic realm, this assumption needs to be carefully considered. For example, in settings where roads have small capacity (e.g., in a city centre) few vehicles can impact the congestion levels significantly.} %
 
Regarding the analysis, nonatomic congestion games provide a platform that is, in general, easier to investigate. %
Specifically, the problem of minimizing the \socialcost can be formulated as a convex optimization program (under standard semi-convexity assumptions on the resource costs), and thus efficiently solved to optimality even for very large networks. A commonly employed approach is based on the Frank-Wolfe algorithm which leverages efficient algorithms for shortest-path calculations \citep{geisberger2012exact}. Contrary to that, in the atomic case, minimizing the social cost is $\NP$-hard and remains $\NP$-hard to approximate within a factor that we give  in \eqref{eq:approxfact}.
A similar conclusion holds when turning attention to the design of optimal taxation mechanisms, owing to the uniqueness of the equilibrium flows in the nonatomic model. For this setting, there are well-known taxation mechanisms, referred to as marginal cost taxes, that can be efficiently-computed and incentivize optimal behaviour \citep{pigou}. The question is significantly more challenging in the atomic case, where a mechanism is confronted with a multiplicity of equilibria. 
{Here, the same marginal cost taxes do not improve - and instead can deteriorate - the resulting system efficiency \citep{paccagnan2019incentivizing}.}\footnote{
\label{foot:marginalcosttolls}
 While this statement might feel as a contradiction of the results on the convergence of Nash equilibria to Wardrop equilibria for large number of  \players{} (see following paragraph), we stress that, for such convergence results to hold, each \player{} needs to have infinitesimal impact on the congestion in the limit.  However, there are problems settings in which this assumption is not satisfied. In these cases, i.e., when the number of \players{} grow but their impact on the congestion levels does not vanish, \citep{paccagnan2019incentivizing} shows that employing the marginal cost mechanisms yields a worse price of anarchy than that obtained without levying {\it any} toll.
}
Nonetheless, our results show that efficiently-computable taxation mechanisms can still provide an approximation equal to that obtained by the best centralized polynomial time algorithm. Further, they show that such approximation is near optimal for interesting classes of problems where the nonatomic model is typically utilized (see \Cref{fig:inapprox} and related discussion).

 Finally, we highlight an important connection between non-atomic congestion games and the limiting behaviour of atomic congestion games with large number of \players{}. Specifically, \cite{haurie1985relationship} first, \cite{paccagnan2018nash}
 and \cite{cominetti2020approximation} more recently, show that the set of Nash equilibria of an atomic congestion game converges to the set of Wardrop equilibria of its nonatomic counterpart if the number of \players{} grow large and each of them has diminishing impact on the congestion. It is important to highlight, however, that estimates on the distance between Nash and Wardrop equilibria are available only under further restrictive assumptions on the resource costs, and such estimates depend heavily on the number of \players{}, the parameters and the structure of the game. As a result, convergence may require a very large number of \players{}, so that the nonatomic model should be carefully used as a surrogate for its atomic counterpart.\footnote{For example, \cite[Thm 6.2]{cominetti2020approximation} shows that, strict increasingness of the resource costs is necessary to provide such bounds. Under this (and other milder) assumptions they show that, the distance between Nash and Wardrop equilibria decreases as $\theta \cdot n^{-1/2}$, when \players{} influence decreases as $1/n$. Here, $\theta$ crucially depends on the Lipschitz constant of the resource costs, the total traffic flow, and the size of the strategy space.} 
 Similar studies have been carried out also for stochastic counterparts of atomic congestion games where a large number of \players{} participate in the game, each with small probability \citep{cominetti2020approximation}. Therein the authors show that the congestion levels at a Nash equilibrium converge to a Poisson distributed random variable with mean equal the flow at a Wardrop equilibrium of a modified game. Interestingly, this model allows for uncertainty in the traffic equilibrium with a resulting Poisson distribution closely describing the real-world congestion levels, as demonstrated by the same authors. When applied to this setting, our results are valuable in that they continue to hold {\it robustly} against different possible realizations of the traffic demand, and not only for the expected traffic conditions described by the Wardrop equilibrium.
}}
\subsection{Roadmap}
The remainder of the manuscript is organised as follows. In \Cref{sec:hardness} we prove \Cref{thm:hardness} (hardness of approximation), and \Cref{cor:polyCG,cor:affine_CG_refined} (hardness factor for polynomial resource costs). In \Cref{sec:taxes} we state and prove \Cref{thm:opttaxes}, an enriched version of \Cref{thm:taxes_poa} whereby the expression of the taxation mechanism is also given. A discussion on future research directions concludes the manuscript.

\section{\NP-hardness of approximation}
\label{sec:hardness}
In this section we prove \cref{thm:hardness}, i.e., we show that approximating the minimum \systemcost below the factor $\rho_\ell$ defined in \eqref{eq:approxfact} is \NP-hard already for congestion games where all resource costs are identical to $\ell$. The proof is based on a reduction from \gaplabelcover, where we make use of a generalization of Feige's partitioning system \citep{Feige98}. 
We proceed as follows: In \cref{subsec:ingredients} we introduce \gaplabelcover~and, independently, the partitioning system. In \cref{subsec:reduction} we present the reduction and in \cref{subsec:actualproof} prove \cref{thm:hardness}. \cref{subsec:bell-number} shows that $\sup_{\ell\in\mathscr{L}} \rho_\ell$ reduces to the $(d+1)$'st Bell number when resource costs are polynomials of maximum degree $d$, as claimed in \cref{cor:polyCG}. Throughout, we use $[m]$ to denote the set $\{1,\dots,m\}$ and {$\text{Bin}(h,p)$ to denote a Binomial distribution with parameters $h$, $p$.}

\subsection{Background tools}
\label{subsec:ingredients}
\subsubsection{Label Cover}
 We start by introducing \gaplabelcover, a commonly utilized \NP-hard problem to obtain tight inapproximability results. We employ the weak-value formulation of the problem, implicitly used in \cite{Feige98} and also defined in \cite{DudyczMMS20}.
 {Informally, in a \gaplabelcover~problem, we are given a bipartite graph and a set of colors which we must use to color the left vertices of the graph. Additionally, we are given a function per each edge mapping the color chosen for the left incident vertex to a color for the right incident vertex.
The following paragraph formalizes the setting, while \cref{fig:instance_label_cover} provides an example. As common in the literature, we refer to the set of colors as to the alphabet, and to a color as to a label in the alphabet.} 

\medskip

\begin{definition}
\noindent A {\labelcover} instance is described by a tuple $(L,R,E,h,[\lalph],[\ralph],\{\pi_{e}\}_{e\in E})$, where 
 \begin{itemize}[leftmargin=*]
 \item[-]
$L$ and $R$ are sets of left and right vertices of a bi-regular bipartite graph with edge set $E$ and right degree $h$ (i.e., the degree of all vertices in $R$ equals $h$), 
\item[-]
$[\lalph]$ and $[\ralph]$ represent left and right alphabets, and
\item[-]
for every edge $e\in E$, a constraint function
$\pi_{e}:[\lalph]\rightarrow[\ralph]$ maps left labels to right labels. 
 \end{itemize}
\end{definition}

\medskip
\noindent
Given a left labeling $\lab:  L\rightarrow[\lalph]$, i.e., a map that associates every left vertex to a label, we say that a right vertex $u\in R$ is

\begin{itemize}[leftmargin=*]
\item[-] \emph{strongly satisfied} if for every pair of neighbors $v,v'\in L$ of $u$ it is $\pi_{(v,u)}(\lab(v))=\pi_{(v',u)}(\lab(v'))$;
\item[-] \emph{weakly satisfied} if there exist two distinct neighbors $v,v'\in L$ of $u$, \mbox{s.t. $\pi_{(v,u)}(\lab(v))\!=\pi_{(v',u)}(\lab(v'))$.} 	
\end{itemize}

\medskip
\noindent
{Stated differently, given an assignment of colors to the left vertices (i.e., a labeling), a right vertex $u\in R$ is weakly satisfied if there exists two edges incident to that vertex $u$ whose corresponding right color matches. If this property holds for all edges incident to $u$, then the right vertex is strongly satisfied. Depending on whether we can find a coloring of the left vertices satisfying all or a fraction of the right vertices, we then distinguish between a {\tt YES} and a {\tt NO} instance%
.}%

\medskip
\noindent

\begin{definition}
For any $\delta>0$, $h\in\mb{N}$ let \gaplabelcover$(\delta,h)$ denote the following problem: Given a \labelcover~instance $(L,R,E,h,[\lalph],[\ralph],\{\pi_{e}\}_{e\in E})$, distinguish between
\begin{itemize}[leftmargin=12.5mm]
\item[- {\tt YES}:] there exists a labeling that strongly satisfies all right vertices;	
\item[- {\tt~NO}:] no labeling weakly satisfies more than a fraction $\delta$ of the right vertices.
\end{itemize}
\end{definition}

\medskip
\noindent
{Distinguishing between a {\tt YES} and a {\tt NO} instance is known to be difficult (formally \NP-hard) when the size of the right alphabet, i.e., the number of possible right colors, is sufficiently large as recalled next. This result constitutes the basis for our hardness of approximation.
}

\begin{proposition}[{\it \cite{Feige98, DudyczMMS20}}]
\label{prop:hardness-gaplabel}
$\forall\delta>0$, $h\in\mb{N}$, $h\ge 2$, and $\ralph$ sufficiently large (depending on $\delta,h$), \gaplabelcover$(\delta,h)$ is \NP-hard.
\end{proposition}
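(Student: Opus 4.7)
The plan is to derive NP-hardness of \gaplabelcover$(\delta,h)$ from the PCP theorem via Raz's parallel repetition, followed by a degree-lifting gadget tailored to the weak-value formulation.

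I would begin from the base hardness supplied by the PCP theorem, which yields a constant $\epsilon_0>0$ such that gap-3SAT$(1,1-\epsilon_0)$ is NP-hard. The standard encoding turns this into a two-query projection game, equivalently a degree-$2$ \labelcover~instance with constant alphabets, perfect completeness on YES instances, and soundness at most $1-\epsilon_0$ on NO instances. Since for right degree $h=2$ the weak-value and strong-value formulations coincide (there is only one neighbor-pair to compare), Raz's parallel repetition applied $k$ times settles the $h=2$ case: the soundness is driven down to $(1-\epsilon_0)^{\Omega(k)}\le\delta$ at the price of blowing up the alphabets to $[\lalph]^k,[\ralph]^k$, which is absorbed into the requirement that $\ralph$ be sufficiently large (depending on $\delta,h$).

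For $h\ge 3$ I would lift the amplified degree-$2$ instance to right degree $h$ using the degree-lifting construction of \cite{Feige98,DudyczMMS20}. The idea is to package $\binom{h}{2}$ degree-$2$ right vertices into a single new degree-$h$ right vertex, one per unordered pair of its $h$ neighbors, with the property that (i) any labeling that strongly satisfies every original degree-$2$ right vertex also strongly satisfies every new degree-$h$ right vertex, preserving YES completeness, and (ii) any labeling that weakly satisfies a new degree-$h$ right vertex must satisfy at least one of the $\binom{h}{2}$ packaged degree-$2$ right vertices. Property (ii) yields the key soundness pass-through: if more than a $\delta$ fraction of the new degree-$h$ right vertices were weakly satisfied, then more than a $\delta/\binom{h}{2}$ fraction of the original degree-$2$ right vertices would be satisfied. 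Applying parallel repetition with $k$ chosen so that the degree-$2$ soundness drops below $\delta/\binom{h}{2}$, and taking $\ralph$ large enough to accommodate the corresponding alphabet blow-up, completes the reduction.

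The main obstacle is the degree-lifting step: the gadget must couple weak satisfaction in the degree-$h$ instance tightly enough to satisfaction in the degree-$2$ source so as to lose only a factor $\binom{h}{2}$, while preserving perfect completeness. This coupling is precisely the role played by Feige's partitioning-system construction that the paper cites and later extends; once it is in place, the remainder is standard parameter tuning in $\epsilon_0$, $k$, and $\ralph$.
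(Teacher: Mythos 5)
The paper does not prove this proposition at all: it is stated as a known result and cited directly from Feige \cite{Feige98} and Dudycz et al. \cite{DudyczMMS20}, so there is no in-paper argument to compare against. Your sketch does capture the broad strategy used in those references for $h\ge 3$ --- PCP-based hardness, Raz's parallel repetition, and a union bound over the $\binom{h}{2}$ pairs of neighbours that converts ``no pair of the $h$ answers agrees on most right vertices'' into two-prover soundness --- and your observation that weak and strong satisfaction coincide at $h=2$ is correct.

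There are, however, two genuine problems. First, you assert that the degree-lifting coupling ``is precisely the role played by Feige's partitioning-system construction.'' That is a misattribution: the partitioning system plays no role whatsoever in establishing the hardness of \gaplabelcover. It is a separate combinatorial gadget used \emph{downstream}, in the reduction from \gaplabelcover\ to the target optimization problem (max-$k$-cover in \cite{Feige98}, \mincon\ in \cref{subsec:reduction} here). Conflating the two suggests the architecture of the overall argument is not yet clear to you. Second, the lifting step itself is underspecified in a way that would fail as stated: you cannot ``package $\binom{h}{2}$ degree-$2$ right vertices, one per unordered pair of the $h$ neighbours'' of a new vertex when starting from an \emph{arbitrary} amplified degree-$2$ instance, because nothing guarantees that its right vertices organize into such left-vertex cliques. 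The standard route (Feige's $h$-prover proof system, also used by Dudycz et al.) builds the degree-$h$ instance directly from the 3SAT instance: right vertices are the verifier's random strings, left vertices are (prover, query) pairs, and the construction is arranged so that the restriction to \emph{any} pair of provers is itself a parallel-repeated two-prover game with soundness below $\delta/\binom{h}{2}$. One must also check bi-regularity of the resulting bipartite graph, which the proposition's formulation requires and your sketch does not address. With the lifting replaced by the multi-prover construction, your parameter tuning in $\epsilon_0$, $k$, and $\ralph$ is the right shape.
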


\begin{figure}[t!]
\centering
\qquad
\begin{subfigure}[t]{0.3\textwidth}
         \centering
         \includegraphics[scale=0.4]{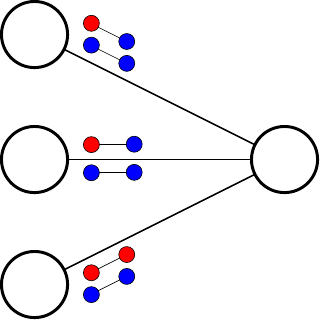}
         \caption{}
\end{subfigure}
\hfill
\begin{subfigure}[t]{0.3\textwidth}
         \centering
         \includegraphics[scale=0.4]{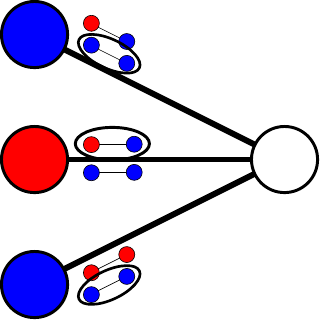}
         \caption{}
\end{subfigure}
\hfill
\begin{subfigure}[t]{0.3\textwidth}
         \centering
         \includegraphics[scale=0.4]{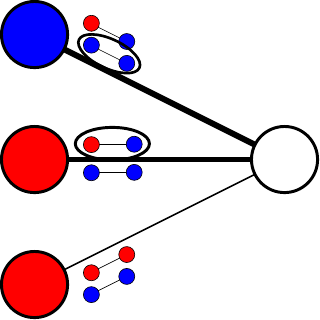}
         \caption{}
\end{subfigure}
\vspace*{2mm}
\caption{Left panel: instance of \labelcover~described by a bi-regular bipartite graph with $|L|=3$, $|R|=1$, $h=3$. Here, the left and right alphabets have equal size $\alpha=\beta=2$ and are represented by the colors $\{{\tt red},{\tt blue}\}$. The constraint functions $\pi_e$ are visualized over the corresponding edges, and map any possible choice of color for the left incident vertex into a color for the right incident vertex.
Central panel: instance of \labelcover~and coloring choice that strongly satisfies the right vertex. This is because the choice of left colors maps, through the edge constraints, to the same color on the right vertex.
Right panel: instance of \labelcover~and coloring choice that weakly (but not strongly) satisfies the right vertex. This is because there exists two left vertices (but not all) whose colors match, once mapped to the corresponding right vertex. 
}
\label{fig:instance_label_cover}
\end{figure}

\subsubsection{Partitioning System}
\label{subsubsec:partsys}
Similarly to \cite{barman2021tight} and \cite{DudyczMMS20}, we generalise a combinatorial object introduced by \cite{Feige98}, called \emph{partitioning system}, which we also equip with a cost function.
{
This object will be used to construct the \players{} allocation sets $\{\mc{A}_i\}_{i=1}^N$ in our ensuing reduction. In particular, it will ensure that, in our reduction, \players{} have allocations that are either optimal or incur a very high social cost, and the ratio between these two costs approaches the inapproximability factor we wish to achieve. A general illustration is provided in \cref{fig:partition_sys} while a concrete example is given in \cref{fig:partition_sys_small}.
}

\medskip
\begin{definition}
\label{def:partsys}
Given a ground set of elements $[n]$, integers $\ralph,h,\el$ such that $\el n/h\in\mb{N}$, $\ralph\ge h\ge \el$, a cost {function} $c:\mb{N}\rightarrow \mb{R}_{\mygezero}$, and $\eta>0$, a partitioning system with parameters $(n,\ralph,h,\el,\eta)$ is a collection of partitions $\mc{P}_1,\dots,\mc{P}_{\ralph}$ of $[n]$ such that:
\medskip
\begin{itemize}
\item[P1)] Every partition $\mc{P}_j$ is a collection of subsets $P_{j,1}, \dots, P_{j,h}\subseteq [n]$ each with $\el n/h$ elements and such that each element from $[n]$ is selected by $\el$ sets in $P_{j,1}, \dots, P_{j,h}$. Observe that, for any $\mc{P}_j$ we have $|\mc{P}_j|=h$, and the above implies
\be
\sum_{r\in [n]}c(|\mc{P}_j|_r)=c(\el) n.
\label{eq:rows_part}
\ee
Here, $|\mc{P}_j|_r$ denotes the number of sets in the collection $\mc{P}_j$ to which element $r$ belongs, where we extended the definition of the function $c$ to include $c(0)=0$. {This allows to simplify notation in \eqref{eq:rows_part} by summing over all resources $r\in [n]$, as opposed to summing over $r\in [n]$ s.t. $|\mc{P}_j|_r\ge 1$.}
\smallskip
\item[P2)] for any $\Tset\subseteq[\ralph]$ with $|\Tset|=h$ and for any function $i: [\ralph]\rightarrow[h]$, let $Q=\{P_{j,i(j)}, j \in \Tset\}$. It is
\be
\sum_{r\in [n]} c(|Q|_r)\ge \left(\mb{E}_{X\sim \text{Bin}(h,\el /h)}[c(X)]-\eta\right)n.
\label{eq:scrambled}
\ee
\end{itemize}
\end{definition}
\begin{figure}[t!]
\centering
\vspace*{2mm}
\includegraphics[scale=1.4]{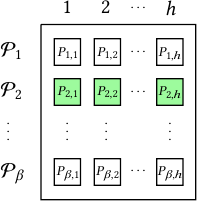}\qquad\qquad\qquad
\includegraphics[scale=1.4]{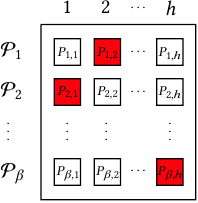}
\vspace*{2mm}
\caption{A partitioning system with parameters $(n,\ralph,h,k,\eta)$. Each box contains $kn/h$ elements from the ground set $[n]$. Property P1 ensures that selecting an entire row results in low cost (left, in green). Property P2 ensures that selecting one and only one box per each row, for a total of $h$ rows, results in a high cost (right, in red).}\vspace*{-2mm}
\label{fig:partition_sys}
\end{figure}
To gain some intuition on properties P1 and P2, we provide a graphical representation of a partitioning system in \cref{fig:partition_sys}, whereby each box contains $\el n/h$ elements from the ground set $[n]$. Property 1 asserts that every time we select {an} entire row, we are guaranteed to cover every element in $[n]$ precisely $\el$ times. Property 2 asserts that every time we select one and only one set from each row, for a total of $h$ rows, many sets cover the same \mbox{elements so that the resulting cost is high.} {A concrete example of partitioning system is provided in \cref{fig:partition_sys_small}.}
\begin{figure}[t!]
\centering
\qquad\qquad
\begin{subfigure}[t]{0.15\textwidth}
         \centering
         \includegraphics[scale=1.6]{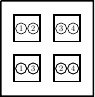}
         \caption{}
\end{subfigure}
\qquad
\begin{subfigure}[t]{0.3\textwidth}
         \centering
         \includegraphics[scale=1.6]{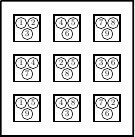}
         \caption{}
\end{subfigure}
\qquad
\begin{subfigure}[t]{0.3\textwidth}
         \centering
         \includegraphics[scale=1.6]{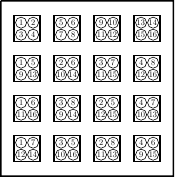}
         \caption{}
\end{subfigure}
\vspace*{2mm}
\caption{{Examples of partitioning systems obtained with the choice of $c(x)=x^2$, $k=1$, for increasing values of $\beta=h=2,3,4$ representing the number of rows and columns. Here, each labeled circle represents a resource from the ground set. To verify that (a) is indeed a partitioning system it suffices to observe that, when selecting any entire row, each resource is covered only once (since $k=1$), so that we incur a cost of $4\cdot c(1)=4$ (Property 1), while when selecting one box from each row we incur a cost of $c(1)+c(2)+c(1)=6=\mb{E}_{X\sim \text{Bin}(2,1/2)}[X^2]$ (Property 2). Analogously for cases (b) and (c).
Note how, in these examples, we have been able to construct partitioning systems where the error parameter $\eta$ entering \eqref{eq:scrambled} is identically zero. 
Finally, we observe that partitioning systems obtained with $c(x)=x^2$, $k=1$ correspond, in the ensuing reduction, to the special case of congestion games with linear resource cost $\ell(x)=x$.}}\vspace*{-2mm}
\label{fig:partition_sys_small}
\end{figure}

At this stage, we recall that the probability mass function of the binomial distribution $\text{Bin}(h,k/h)$ converges pointwise for fixed $k\ge h$ to the probability mass function of the Poisson distribution $\text{Poi}(k)$ as $h$ grows large \citep{durrett2019probability}%
. Hence, we informally observe that, when $\rho_\ell<\infty$, 
\[\frac
{\sum_{r\in Q} c(|Q|_r)}
{\sum_{r\in \mc{P}_j}c(|\mc{P}_j|_r)}
 = \frac{
\mb{E}_{X\sim \text{Bin}(h,\el /h) }[c(X)]-\eta}{c(\el)} \xrightarrow{h\rightarrow \infty}
\frac{
\mb{E}_{X\sim \text{Poi}(\el) }[c(X)]-\eta}{c(\el)},
\]
see \cref{lem:convergence} in \cref{app:binomial-poisson} for a proof. 
If we choose $c(x)=x\ell(x)$, let $h\rightarrow \infty$ and consider the worst case over $\el$, this ratio precisely matches the inapproximability result we aim to derive (cfr. the previous expression and $\rho_\ell$ in \eqref{eq:approxfact}). We are thus left to piece these elements together in the ensuing section. Before doing so, we remark that partitioning systems do exist for every choice of $\eta>0$ as long as $n$ is taken sufficiently large as stated in the next proposition. Its proof follows the same approach of that in \citep{barman2021tight}, and is included in \cref{subsub:app-proof-partition-exists} for completeness. 
We remark that, when used in the upcoming reduction, we will be able to compute a partitioning systems in time that is independent on the size of the instance we reduce from.

\begin{proposition}
\label{prop:partition-exists}
Let $c:\mb{N}\rightarrow \mb{R}_{\mygezero}$ non-decreasing be given. For every choice of $\beta\ge h\ge k$ integers with $kn/h\in\mb{N}$, $\eta\in(0,1)$, and $n\ge \frac{c(h)^2}{2\eta^2}[\log(10)+\ralph\log(h+1)]$ a partitioning system with parameters $(n,\beta,h,k,\eta)$ and cost function $c$ exists. It can be found in time depending solely on $h$, $n$ and $\ralph$.
\end{proposition}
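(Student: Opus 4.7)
The plan is to prove existence via the probabilistic method: sample the $\beta$ partitions uniformly at random and show that with positive probability the random collection satisfies P1 and P2 simultaneously. The constructive/runtime claim then follows because the space of candidate $\beta$-tuples of partitions has cardinality that is a function only of $n,h,\beta$, so a deterministic exhaustive search over this space completes existence into an algorithm whose running time depends solely on $n,h,\beta$.

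\medskip
\emph{Random construction.} I would sample $\mc{P}_1,\dots,\mc{P}_\beta$ independently, each being a uniformly random partition of $[n]$ into $h$ blocks of size $kn/h$ in which every element appears in exactly $k$ blocks (e.g.\ by assigning to each element a uniformly random $k$-subset of $[h]$, conditioned on all block sizes being $kn/h$). Property P1 is then deterministic by construction: for every $j$, $\sum_{r\in[n]} c(|\mc{P}_j|_r)=c(k)n$ since every element is covered exactly $k$ times.

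\medskip
\emph{Expectation and concentration for a fixed selection.} Fix a pair $(\Tset,i)$ with $\Tset\subseteq[\beta]$, $|\Tset|=h$, $i:[\beta]\to[h]$, and let $Q=\{P_{j,i(j)}:j\in\Tset\}$. By independence of the partitions across $j$ and by symmetry within each partition, $\Pr[r\in P_{j,i(j)}]=k/h$, so $|Q|_r\sim\text{Bin}(h,k/h)$ for every $r\in[n]$ and
\[
\mb{E}\Big[\sum_{r\in[n]}c(|Q|_r)\Big]=n\cdot \mb{E}_{X\sim\text{Bin}(h,k/h)}[c(X)].
\]
To control the lower tail I would apply a bounded-differences (McDiarmid) inequality to $f(\mc{P}_{j_1},\dots,\mc{P}_{j_h})=\sum_r c(|Q|_r)$ regarded as a function of the $h$ independent partitions indexed by $\Tset$, to obtain
\[
\Pr\!\left[\sum_{r\in[n]} c(|Q|_r)< \Big(\mb{E}_{X\sim\text{Bin}(h,k/h)}[c(X)]-\eta\Big)n\right]\le \exp\!\left(-\frac{2\eta^2 n}{c(k)^2}\right).
\]
A union bound over the at most $\binom{\beta}{h}h^\beta\le (h+1)^\beta$ choices of $(\Tset,i)$ combined with the hypothesis $n\ge \frac{c(k)^2}{2\eta^2}[\log 10 +\beta\log(h+1)]$ gives total failure probability at most $(h+1)^\beta\cdot \tfrac{1}{10(h+1)^\beta}=\tfrac{1}{10}<1$, so a realization satisfying P1 and P2 exists, proving the proposition.

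\medskip
The main obstacle is producing the concentration inequality with the prescribed constant $c(k)^2$ rather than the cruder $c(h)^2$ or $(c(h)-c(h-1))^2$ that a naive McDiarmid analysis would yield, because changing a single partition shifts $|Q|_r$ by one on up to $2kn/h$ elements with potentially large increments $c(x)-c(x-1)$. Overcoming this requires exploiting that $|Q|_r$ is typically near $k$, for instance by first truncating $c$ at $c(k)$ (which only decreases $\sum_r c(|Q|_r)$ and so preserves the direction of the lower-tail inequality) and absorbing the resulting loss in expectation into the slack $\eta$, or alternatively by a Bernstein/variance-based bound exploiting the binomial structure. This is the delicate technical ingredient imported from the construction in Barman et al., and the rest of the proof is a routine probabilistic-method wrapper around it.
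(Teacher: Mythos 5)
Your overall architecture (independent random partitions, P1 holding deterministically, a per-selection concentration bound, a union bound over the at most $(h+1)^{\ralph}$ pairs $(\Tset,i)$, then derandomization by exhaustive search over a space whose size depends only on $n,h,\ralph$) is exactly the paper's. However, the concentration step as you propose it does not work, and the obstacle is not the one you identify. If you apply McDiarmid to $\sum_r c(|Q|_r)$ as a function of the $h$ independent partitions, then changing one partition relocates a block of $kn/h$ elements, so the bounded difference of each coordinate is of order $(kn/h)\,c(h)$ (or $(kn/h)\,c(k)$ even after truncation). Hence $\sum_i d_i^2=\Theta(n^2)$ and McDiarmid yields $\exp\bigl(-2(\eta n)^2/\Theta(n^2)\bigr)=\exp(-\Theta(\eta^2 h/k^2 c(h)^2))$, a constant \emph{independent of $n$}. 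No choice of $n$ can then beat the $(h+1)^{\ralph}$ union bound, so the argument collapses; truncation at $c(k)$ or a Bernstein refinement does not repair this, because the Lipschitz constants remain of order $n$.

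The correct decomposition is over resources, not over partitions. For fixed $(\Tset,i)$ write $\sum_{r\in[n]}c(X_r)$ with $X_r=\sum_{j\in \Tset}\mathbbm{1}_{r\in P_{j,i(j)}}$; each summand lies in $[0,c(h)]$ since $X_r\le h$ and $c$ is non-decreasing, and $X_r\sim\text{Bin}(h,k/h)$ because the indicators are independent across $j\in\Tset$. The summands are dependent across $r$ (block sizes are exactly $kn/h$), but for each fixed $j$ the vector $(\mathbbm{1}_{r\in P_{j,i(j)}})_{r\in[n]}$ is a permutation distribution of a $0/1$ vector with exactly $kn/h$ ones, hence negatively associated; combining independence across $j$ with closure of negative association under monotone maps shows $\{c(X_r)\}_{r\in[n]}$ is negatively associated, and the Chernoff--Hoeffding bound for negatively associated bounded variables gives $\Pr[|\sum_r c(X_r)-n\,\mb{E}_{X\sim\text{Bin}(h,k/h)}c(X)|\ge\eta n]\le 2e^{-2n\eta^2/c(h)^2}$, which does decay in $n$ and closes the union bound. (Note also that the exponent the paper's proof actually obtains involves $c(h)^2$, not the $c(k)^2$ you were trying to reach, so that part of your concern is moot.)
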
}
\subsection{Reduction}
\label{subsec:reduction}
We first provide the reduction and prove the hardness result in the case when $\rho_\ell<\infty$. We consider the case $\rho_\ell=\infty$ separately at the end of \cref{subsec:actualproof}. 
Starting from the resource cost function $\ell$ and a fixed $\varepsilon>0$, we will first construct a partitioning system with {appropriately defined parameters}. We then reduce an instance of $\labelcover$ $\mc{C}=(L,R,E,h,[\lalph],[\ralph],\{\pi_{e}\}_{e\in E})$ to an instance of congestion game $G=(N,\mc{R},\{\mc{A}_i\}_{i=1}^N,\{\ell_r\}_{r\in\mc{R}})$ with identical resource cost.
The idea is to define $G$ by creating a copy of the partitioning system for every right vertex and use that to \mbox{define the players' allocation sets $\mc{A}_i$.}

\medskip
{{\it Construction of the partitioning system.} Given $\ell$ and
$\varepsilon>0$, we construct a partitioning system setting its parameters as follows:
\begin{itemize}
	\item[-] Let $\el$ be the maximizer  of \eqref{eq:approxfact};\footnote{For ease of exposition, we show the result when the supremum is attained at some value $k\in \mb{N}$. If this is not the case, then the supremum must be achieved at $k\rightarrow\infty$. One then fixes $k\in\mb{N}$ and proceeds with the reasoning as is. This will give rise to an additional error term $\nu(k)$ with $\lim_{k\to\infty} \nu(k)=0$ in the ensuing Equation \eqref{eq:limit}, where the right hand side will be replaced by $(\rho_\ell+\nu(k))c(k)$. Nevertheless one can select $k$ sufficiently large and control such error to a desired accuracy.}%
\item[-] Let $c$ be the cost function defined from $\ell$ as in $c(x)=x\ell(x)$ for all $x\in\mb{N}$ and $c(0)=0$;
\item[-] Let $\eta< \min\{\varepsilon c(k) /4,1\}$; 
\item[-] Let $h\ge k$ integer be such that $|\mb{E}_{X\sim \text{Bin}(h,\el/h)} [X\ell(X)] -\mb{E}_{X\sim \text{Poi}(\el)} [X\ell(X)]|\le \varepsilon c(k)/4$, which exists thanks to the convergence result in \cref{lem:convergence} in \cref{app:binomial-poisson};
\item[-] \mbox{Let $\delta\le \varepsilon/(2\rho_\ell)$ and fix $\ralph$  large enough to ensure hardness of \gaplabelcover$(\delta,h)$ in \cref{prop:hardness-gaplabel};}
\item[-] Let $n$ be so that $kn/h\in\mb{N}$ and $n$ is large enough to have existence, from \cref{prop:partition-exists}, of the partitioning system with parameters $(n,\ralph,h,k,\eta)$ and cost function $c$. 
\end{itemize}}
\medskip
 Owing to \cref{prop:hardness-gaplabel}, since $\ralph,h,k,\eta$ are now fixed, a partitioning system can be computed in time {\it independent} of the size of the $\labelcover$ instance we wish to reduce from. 
  
  \medskip
  {{\it Reduction of \labelcover~instance to congestion game instance.}}
  Now we take an instance of $\labelcover$ $\mc{C}=(L,R,E,h,[\lalph],[\ralph],\{\pi_{e}\}_{e\in E})$ where $h$, $\ralph$ are defined above and thus 
      \cref{prop:hardness-gaplabel} (\NP-hardness) holds.
  For each right vertex $u\in R$ we use the local partitioning system with parameters $(n,\ralph,h,\el,\eta)$. We refer to the resources in the partitioning system corresponding to the right vertex $u\in R$ with $\{1^u,\dots,n^u\}$. Similarly we use $\mc{P}^u_j=\{{P}^u_{j,1},\dots, {P}^u_{j,h}\}$ for the local partitions. 
\mbox{The congestion game $G=(N,\mc{R},\{\mc{A}_i\}_{i=1}^N,\{\ell_r\}_{r\in\mc{R}})$ is defined as follows} 
\begin{itemize}
	\item[-] each left vertex corresponds to an \player{}, so that the number of \players{} is $N=|L|$;
	\item[-] the ground set of resources is the union of the resources introduced by each local partitioning system on every right vertex, i.e., $\mc{R}=\cup_{u\in R}\{1^u,\dots,n^u\}$;
	\item[-] each resource cost is equal to $\ell$, i.e., $\ell_r(x)=\ell(x)$ for all $r\in\mc{R}$, $x\in\mb{N}$;
	\item[-] as each left vertex $v\in L$ corresponds to one and only one \player{} $i\in [N]$, we refer to a left vertex as $i\in [N]$ instead of as $v\in L$ to ease the notation.	For \player{} $i\in [N]$ we construct each pure strategy $a_i\in\mc{A}_i$ as follows.  We let the left vertex $i$ select a label $\llabel\in[\lalph]$, and correspondingly take the union over all right vertices $u\in \mc{N}(i)$ neighbouring with $i$, of the resources belonging to the block $P^u_{j,i}$, where $j=\pi_{(i,u)} (\llabel)$. Repeating over all left labels we \mbox{obtain the strategy set $\mc{A}_i$. Formally}
	\[
	\mc{A}_i=\left\{\cup_{u\in \mc{N}(i)} {P}^u_{j,i}, \quad\text{where} \quad j=\pi_{(i,u)}(\llabel),\quad  \forall \llabel\in [\lalph] 
	\right\}.
	\]
\end{itemize}
	The following figure exemplifies the construction. We conclude remarking that the above procedure implicitly defines a map associating a profile of left labels $(\llabel_1,\dots,\llabel_N)$ (one per each left vertex) to an allocation $(a_1,\dots,a_N)\in\mc{A}$, and that spanning through all possible choices of $(\llabel_1,\dots,\llabel_N)$ produces all possible allocations in $\mc{A}$. This observation will be useful in proving the hardness result.
\begin{figure}[htb!]
\vspace*{3mm}
\centering
\includegraphics[scale=1.3]{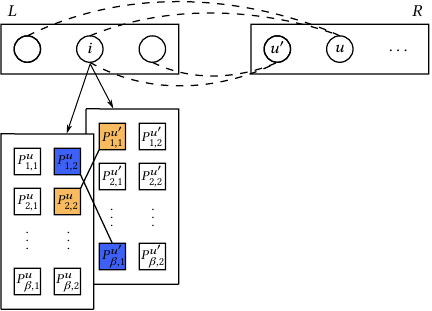}
\vspace*{3mm}
\caption{Given a label cover instance $\mc{C}=(L,R,E,h,[\lalph],[\ralph],\{\pi_{e}\}_{e\in E})$, our reduction associates every left vertex in $L$ to an \player{} in the game $G$. Here we exemplify how the action set $\mc{A}_i$ is generated for  \player{} $i\in L$. To ease the presentation, we consider a left alphabet of size $2$ and use {\tt blue} and {\tt orange} to identify the left labels. Since $i$ has two right neighbours, $u$ and $u'$, we construct two partitioning systems with ground set of resources $\{1^u,\dots,n^u\}$ and $\{1^{u'},\dots,n^{u'}\}$. Constraints $\pi_{(i,u)}({\tt blue})=1$, $\pi_{(i,u)}({\tt orange})=2$ and $\pi_{(i,u')}({\tt blue})=\ralph$, $\pi_{(i,u')}({\tt orange})=1$ are given, and we represent them graphically with the fact that on the left partitioning system, the label ${\tt blue}$ (resp. ${\tt orange}$)  is associated to the block in row $1=\pi_{(i,u)}({\tt blue})$ (resp. row $2=\pi_{(i,u)}({\tt orange})$). Similarly for the right partitioning system using $\pi_{(i,u')}$ to determine the row.
The set $\mc{A}_i$ is readily constructed as $\mc{A}_i=\{P_{1,2}^u\cup P^{u'}_{\ralph,1}, P_{2,2}^u\cup P^{u'}_{1,1}\}$, where the first (resp. second) allocation corresponds to a {\tt blue}  (resp. {\tt orange}) left label.  
}
\end{figure}

\subsection{Proof of the result}
\label{subsec:actualproof}
As anticipated, we first prove the result for the case of $\rho_\ell<\infty$. At the end of this section, we turn the attention to $\rho_\ell=\infty$. For any given instance of $\labelcover$ $\mc{C}=(L,R,E,h,[\lalph],[\ralph],\{\pi_{e}\}_{e\in E})$, resource cost $\ell$, and $\varepsilon>0$, 
we consider an instance of congestion game $G=(N,\mc{R},\{\mc{A}_i\}_{i=1}^N,\{\ell_r\}_{r\in\mc{R}})$ constructed as in the previous section.
We will now show that%
\begin{itemize}
\item[-] {\it completeness (\cref{subsec:completeness})}: If the instance $\mc{C}$ is a {\tt YES}, then $\min_{a\in \mc{A}}SC(a)\le n|R|c(k)$,	  
\item[-] {\it soundness (\cref{subsec:soundness})}: If the instance $\mc{C}$ is a {\tt NO}, then 	$\min_{a\in \mc{A}}SC(a)> (\rho_\ell-\varepsilon)n|R|c(k)$.
\end{itemize} 
An algorithm solving $\mincon$ with an approximation ratio smaller than {$\rho_\ell-\varepsilon$ will be able to distinguish between {\tt YES}/{\tt NO} instances of an \NP-hard \mbox{promise problem. This will conclude the proof.}

\subsubsection{Completeness}
\label{subsec:completeness}
We intend to show that if $\mc{C}$ is a {\tt YES} instance, then $\min_{a\in \mc{A}} \SC(a) \le n |R| c(\el)$. This follows readily. In fact, if $\mc{C}$ is a {\tt YES} instance, there exists a labeling that strongly satisfies all right vertices. This implies that there exists an allocation $a^\star\in \mc{A}$ whereby, for any given right vertex, all neighbouring left vertices (\players{}) have selected blocks belonging to an entire row of the corresponding partitioning system. Thanks to property P1 the cost of the allocation $a^\star$ on every local partitioning system is equal to $n c(\el)$. Since the total cost is additive over the local partitioning systems, we obtain the result
\[
SC(a^\star)= n |R| c(\el) \implies \min_{a\in \mc{A}}SC(a)\le n |R| c(\el).
\]

\subsubsection{Soundness}
\label{subsec:soundness}
We intend to show that if $\mc{C}$ is a {\tt NO} instance, then $\min_{a\in \mc{A}}\SC(a)\ge (\rho_\ell-\varepsilon) n|R|c(k)$ which is equivalent to showing  $\SC(a)\ge (\rho_\ell-\varepsilon) n|R|c(k)$ for all $a\in \mc{A}$. Towards this goal, we build upon the last observation presented in \cref{subsec:reduction}, i.e., the fact that our construction associates each profile of left labels to an allocation, and that spanning through all possible choices of $(\llabel_1,\dots,\llabel_N)$ produces all possible allocations in $\mc{A}$. 
Hence, it suffices to prove the {desired} property by considering all possible combinations of profiles $(l_1,\dots,l_N)$ and the corresponding induced cost, instead of considering all $a\in \mc{A}$. Since $\mc{C}$ is a {\tt NO} instance, for any possible choice of $(l_1,\dots,l_N)$, no more than $\delta$ fraction of the right vertices are weakly satisfied. Owing to property P2, each partitioning system corresponding to a non weakly satisfied right vertex has a cost larger than $\left(\mb{E}_{X\sim \text{Bin}(h,\el /h)}[c(X)]-\eta\right)n$. Thus, since at least $(1-\delta)|R|$ right vertices \mbox{are not weakly satisfied, it is}
\[
\SC(a)\ge \left(\mb{E}_{X\sim \text{Bin}(h,\el /h)}[c(X)]-\eta\right)(1-\delta)n|R|,\qquad\forall a\in\mc{A}.
\]
We conclude with some cosmetic manipulation. In particular, we recall that the binomial distribution converges to the Poisson distribution when the number of trials grows large and the success probability of each trial goes to zero \citep{durrett2019probability}. In our settings, this corresponds to the fact that the probability mass function of $\text{Bin}(h,\el/h)$ converges pointwise for fixed $k$ to that of $\text{Poi}(\el)$ as $h\rightarrow\infty$. 

Since $\rho_\ell<\infty$, we observe that 
\be
\label{eq:limit}
\lim_{h\to\infty}\mb{E}_{X\sim \text{Bin}(h,\el/h)} [c(X)] = 
\mb{E}_{X\sim \text{Poi}(\el)} [c(X)] = \rho_\ell c(\el),
\ee
where equality holds thanks to \cref{lem:convergence}  in \cref{app:binomial-poisson}. %
This implies the existence of a function $\theta(h)$ with $\theta(h)\rightarrow 0$ as $h\rightarrow \infty$ allowing to control the error, and for which 
$
 \mb{E}_{X\sim \text{Bin}(h,\el/h)} [c(X)]
\ge \rho_\ell c(\el)-\theta(h). 
$
In other words the LHS can be made arbitrarily close to $\rho_\ell c(\el)$ by selecting $h$ sufficiently large. Hence, 
\be
\label{eq:last-step}
\begin{split}
\SC(a)&\ge
(\rho_\ell c(\el)-\theta(h) - \eta)(1-\delta)n|R|\\
&=\left[\rho_\ell
- \frac{\theta(h)}{c(\el)}
-\frac{\eta}{c(\el)}
- \left(\rho_\ell - \frac{\theta(h)}{c(\el)}-\frac{\eta}{c(\el)}\right)\delta \right] n |R| c(\el)\\
&\ge 
\left[\rho_\ell
- \frac{\theta(h)}{c(\el)}
-\frac{\eta}{c(\el)}
-\rho_\ell\delta \right] n |R| c(\el)\\
&>\left[\rho_\ell- \frac{\varepsilon}{4}- \frac{\varepsilon}{4} -  \frac{\varepsilon}{2}\right]n |R| c(\el) = (\rho_\ell-\varepsilon)n|R|c_k.
\end{split}
\ee
The last inequality holds by the choice of parameters, ensuring that $\frac{\theta(h)}{c(k)}\le \frac{\varepsilon}{4}$, \mbox{$\frac{\eta}{c(k)}< \frac{\varepsilon}{4}$, $\rho_\ell\delta\le \frac{\varepsilon}{2}$.}

\paragraph{Case of $\rho_\ell=\infty$.} As anticipated, we treat the case of unbounded $\rho_\ell$ separately. Towards this goal, we follow the same reduction of \cref{subsec:reduction}, with minor modification on the choice of parameters. We replace $\rho_\ell$ with a fixed (and conceptually large) constant $M$. 
Since $\rho_\ell=\infty$, we note that $\mb{E}_{X\sim \text{Poi}(\el)} [c(X)]/c(k)$ is unbounded at some $k$ (possibly infinity). Since the probability mass functions of $\text{Bin}(h,\el/h)$ and $\text{Poi}(\el)$ converge, we can choose the pair $h$ and $k$ so that $\mb{E}_{X\sim \text{Bin}(h,\el/h)} [c(X)]\ge M c(k)$. Finally, we set $\delta\le \varepsilon/(2M)$. One then follows the same proof as in the case of bounded $\rho_\ell$, whereby \eqref{eq:last-step} is replaced with $\SC(a)\ge (M-\eta/c(k))(1-\delta)n|R|c(k)\ge 
(M
-\frac{\eta}{c(\el)}
-M\delta) n |R| c(\el)\ge (M-\varepsilon/4-\varepsilon/2)n|R|c(k)>(M-\varepsilon)n|R|c(k)$. Since $M$ can be taken to be arbitrarily large, the problem is \NP-hard to approximate within any finite ratio.
\hfill\qed

\subsection{Hardness factor for polynomial resource cost}
\label{subsec:bell-number}

\cref{cor:polyCG} claims that, when resource costs are obtained by non-negative combinations of monomials of maximum degree $d>0$, \mincon~is hard to approximate within any factor smaller than the $(d+1)$'st Bell number. Note that this is a direct consequence of \cref{thm:hardness} and its ensuing discussion, which applies since for $d\ge0$, each monomial $x^d$ is positive, non-decreasing, semi-convex for $x\in\mb{N}$.  {
In this setting, \mincon~is \NP-hard to approximate within any factor smaller than $\sup_{\ell\in\mathscr{L}}\rho_{\ell}$, where $\mathscr{L}$ contains all polynomials of maximum degree $d$ with non-negative coefficients. Interestingly, \Cref{lem:factor_polynomials} in \Cref{app:factor-poly} shows that the factor $\rho_\ell$ is maximized amongst all polynomials in $\mathscr{L}$ by the monomial $m(x)=x^d$. Hence, for polynomial congestion games, it is
}
\[
\sup_{\ell\in\mathscr{L}} \rho_\ell=\rho_m=\sup_{x\in\mb{N}}\frac{\mb{E}_{P\sim \text{Poi}(x)}[P^{d+1}]}{x^{d+1}}=\sup_{x\in\mb{N}}~~\sum_{i=0}^{d+1} x^{i-(d+1)}{d+1 \brace i}=\sum_{i=0}^{d+1}{d+1 \brace i}=\mc{B}(d+1).
\]
{The first equality follows from the fact that $m(x)=x^d$ maximizes the ratio $\rho_\ell$, while the second from the definition of $\rho_m$.}
The third equality follows from the fact that the $(d+1)$'st moment of the Poisson distribution $\text{Poi}(x)$ equals $\sum_{i=0}^{d+1}x^i {d+1 \brace i}$, where ${d+1 \brace i}$ is a Stirling number of the second kind \cite[p. 63]{mansour2015commutation}. The fourth equality holds because each function $x^{i-(d+1)}$ is non-increasing, owing to $i\le (d+1)$, and thus the supremum is attained at $x=1$. The last one is due to the definition of the $(d+1)$'st Bell number, which we denote with $\mc{B}(d+1)$, see \cite[Eq. 1.2]{mansour2015commutation}. 

One can repeat a similar reasoning also when $d\ge0$ is not integer and show that the expression inside the supremum is non-increasing in $x\ge1$, e.g., by computing its derivatives. If follows that the supremum is attained at $x=1$, and the definition of $\rho_\ell$ gives
\be
\label{eq:dobinski}
\rho_\ell=\frac{1}{e}\sum_{i=0}^\infty \frac{i^{d+1}}{i!}.
\ee
The latter expression is sometimes referred to as the fractional Bell number. Note that using Dobi\'nski's formula \cite[Eq. 1.25]{mansour2015commutation}, one recovers the Bell number $\mc{B}(d+1)$ when $d\in\mb{N}_0$. %

\medskip
{\cref{cor:affine_CG_refined} provides a more refined analysis for affine congestion games where $\ell_r(x)=\mya_rx+\myb_r$, when an upper bound $q$ on $\max_r \mya_r/(\mya_r+\myb_r)$ is given. The proof of \cref{cor:affine_CG_refined} follows readily from the application of \cref{thm:hardness} and its ensuing discussion. Indeed, here one can take the set $\mathscr{L}$ to contain all linear cost functions of the form $\ell(x)=\mya x+\myb$ where $\mya/(\mya+\myb)\leq q$. One can then easily compute $\rho_\ell$ explicitly from \eqref{eq:approxfact}
\[
\rho_\ell = \sup_{x\in\mb{N}}\frac{\mb{E}_{P\sim \text{Poi}(x)}[P(\mya P+\myb)]}{x(\mya x+\myb)}=\sup_{x\in\mb{N}}\frac{\mya (x^2+x)+\myb x}{\mya x^2+\myb x}
=1+\frac{\mya}{\mya+\myb},
\] and thus obtain the desired hardness factor, i.e., $\sup_{\ell\in\mathscr{L}}\rho_\ell=1+q$.}

\section{Taxes achieve optimal approximation}
\label{sec:taxes}
In this section we show how to compute a taxation mechanism whose price of anarchy matches the hardness factor. Since taxation mechanisms can be utilized to derive polynomial time algorithms with an approximation factor matching the price of anarchy (\cref{subsec:contributions}), in the ensuing \cref{subsec:comparison-sviri} we compare the optimal price of anarchy with the best known polynomial approximation of \cite{Makarychev18}. 

\medskip
We start by introducing a parameterised family of taxation mechanisms for which we will later provide (efficiently computable) parameters that achieve the desired result. Our taxation mechanisms take as input a congestion game $G$, with resource costs belonging to a common set $\mathscr{L}$. %
	{Toward this goal, for each $\ell_r:\mb{N}\rightarrow\mb{R}_{\mygezero}$ given, we extend its definition to $\ell_r:\mb{N}_0\rightarrow\mb{R}_{\ge 0}$ by setting $\ell_r(0)=0$. This is without loss of generality and merely needed to ease the notation (see \cpageref{foot:simplify_b(0)} and \cref{foot:simplify_b(0)}). For similar reasons, we define $c_r:\mb{N}_0\rightarrow\mb{R}_{\ge 0}$ as $c_r(x)=x\ell_r(x)$ for $x\in\mb{N}$ and $c_r(0)=0$.} Finally, we associate each resource cost $\ell_r(x)$ with the function $\hfun_r:\mb{R}_{\ge0}\rightarrow \mb{R}_{\ge0}$ defined as 
	\be
	\label{eq:binExp}
    \hfun_r(v)= \mb{E}_{P\sim \text{\normalfont Poi}(v)}[P\ell_r(P)] = \left(\sum_{i=0}^\infty i\ell_r(i)\frac{v^i}{i!}\right)e^{-v},
	\ee
	{which we use to introduce the following set of parametrized mechanisms.}

\begin{definition}[Parameterised Taxation Mechanisms]	
\label{def:family}
	Given a parameter vector $(v_r)_{r\in \mc{R}}$ with $v_r\in\mb{R}_{\ge0}$ and a set of resources $\mc{R}$ with costs $\ell_r(x)$%
	, define a parameterised taxation function 
	$\tau_r:\mb{N}_0\times\mb{R}_{\ge0} \rightarrow \mb{R}_{\ge0}$, with $\tau_r(x,v_r)=f_r(x,v_r)-\ell_r(x)$ %
	where $f_r(x,0)=\ell_r(x)$, $f_r(0,v)=0$, and
\be
\label{eq:f-designed}
f_r(x,v)=\frac{(x-1)!}{v^x}\sum_{i=0}^{x-1} \frac{\hfun_r(v)-i\ell_r(i)}{i!}{v^i},\qquad x\in\mb{N}_0, v\in\mb{R}_{>0}.
\ee
\end{definition}
\noindent{Observe that the previous taxation mechanism effectively substitutes the original resource costs $\ell_r(x)$ with the new resource costs $f_r(x,v_r)$ given in \eqref{eq:f-designed}. Indeed, when taxes are factored in, the resource cost perceived by each \player{} on resource $r$ is $\ell_r(x)+\tau_r(x,v_r)=f_r(x,v_r)$.}
{The following lemma provides three important properties of the above class of taxation mechanisms. In particular, \cref{lem:propertiesTM} ensures that taxes are non-negative, that modified resource costs introduced in \eqref{eq:f-designed} are non-decreasing and that they satisfy a recursion which will be crucial later on. The proof can be found in \cref{app:twoproperties}.}

{
\begin{lemma}
\label{lem:propertiesTM}
For all $v\in\mb{R}_{\ge 0}$,  $x\in \mb{N}_0$, and $r\in\mc{R}$ the taxation mechanism introduced in \cref{def:family} satisfies:
\begin{itemize}
	\item [(a)] $\tau_r(x,v)\ge 0$,
	\item [(b)] $f_r(x+1,v) \ge f_r(x,v)$,
	\item [(c)] $x \ell_r(x)-x f_r(x,v)+v f_r(x+1,v)=\hfun_r(v)$.
\end{itemize}
\end{lemma}
}

{As we will see in the remainder of the section, the mechanism that optimises the price of anarchy makes use of the parametrized taxation mechanism introduced in \cref{def:family}, with parameters obtained by solving the following convex program in the variables $y_i$, $v_r$ 
\be
\begin{aligned}
\label{eq:cvxprogram}
\min%
\quad & \sum_{r\in\mc{R}}
\hfun_r(v_r)\\
\text{\normalfont subject to}\quad & v_r = \sum_{i=1}^N\sum_{k\in[s_i]\,:\,r\in a_{i,k}} y_{i,k}\quad && \text{ for all } r\in\mc{R}, \\
     & y_i\in \Delta(s_i)&& \text{ for all } i\in[N], \\
\end{aligned}
\ee
where we let $s_i=|\mc{A}_i|$, $a_{i,k}$, denote the $k$-th action available to player $i$, and $\Delta(s)$ represent the $s$-th dimensional simplex. 
{This program can be easily interpreted as the continuous relaxation of the original $\mincon$ problem, where the resource costs $\ell_r$ have been replaced by the modified resource costs $p_r$ previously defined.}\\
We are now ready to state our main result of this section, which is an extension of \cref{thm:taxes_poa}. We state the result when $\sup_{\ell\in\mathscr{L}} \rho_{\ell}<\infty$, else the problem is inapproximable as seen in~\cref{thm:hardness}.
\medskip
\begin{theorem}
\label{thm:opttaxes}
 
Consider a congestion game where each resource cost  belongs to a common set of functions $\mathscr{L}$ {satisfying Assumption~\ref{ass:assumption}}. %
 Let $\sup_{\ell\in\mathscr{L}} \rho_{\ell}<\infty$ and denote with  %
$(\bar{y}_i)_{i\in [N]}, (\bar{v}_r)_{r\in\mc{R}}$
 a solution of the convex program \eqref{eq:cvxprogram}. Then:
\begin{itemize}
\item The taxation mechanism introduced in Definition \ref{def:family} with parameter vector $(\bar{v}_r)_{r\in \mc{R}}$ has a price of anarchy no-higher than $\sup_{\ell\in\mathscr{L}} \rho_{\ell}$;
\item Moreover, for any choice of $\varepsilon>0$ one can design in polynomial time, through the approximate solution of \eqref{eq:cvxprogram}, a taxation mechanism whose price of anarchy is no-higher than $\sup_{\ell\in\mathscr{L}} \rho_{\ell}+\varepsilon$.
\end{itemize}
\end{theorem}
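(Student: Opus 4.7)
The plan is to establish the price-of-anarchy bound via a smoothness-type argument exploiting the recursion in Lemma \ref{lem:recursion} together with the optimality of the convex program \eqref{eq:cvxprogram}; the efficient-computation bullet then follows by verifying convexity of \eqref{eq:cvxprogram} and tracking a perturbation error. I work directly with coarse correlated equilibria, since this subsumes pure/mixed Nash and correlated equilibria. Let $\sigma$ be a coarse correlated equilibrium and $a^\star$ a social-cost minimizer. In each player $i$'s equilibrium inequality, I plug in the mixed deviation $\bar{y}_i$ obtained from the solution of \eqref{eq:cvxprogram}. Summing over players, each resource $r$ is picked up with weight $\sum_{i}\sum_{k\,:\,r\in a_{i,k}}\bar y_{i,k}=\bar v_r$, and Lemma \ref{lem:propertiesTM} (monotonicity of $\ell_r+\tau_r$) allows me to upper-bound the deviator's congestion $|a_{-i}|_r+1$ by $|a|_r+1$, giving in expectation
\[
\mb{E}_{a\sim\sigma}\!\Big[\textstyle\sum_i C_i(a)\Big]\le \mb{E}_{a\sim\sigma}\!\Big[\textstyle\sum_{r} \bar v_r \sum_{j} \alpha_j^r\, f_j(|a|_r+1,\bar v_r)\Big].
\]

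The crux is to manipulate the right-hand side so the taxed equilibrium cost cancels and only the objective of \eqref{eq:cvxprogram} survives. Invoking Lemma \ref{lem:recursion} with $x=|a|_r$ and $v=\bar v_r$ yields
\[
\bar v_r f_j(|a|_r+1,\bar v_r)=p_j(\bar v_r)-|a|_r b_j(|a|_r)+|a|_r f_j(|a|_r,\bar v_r).
\]
Multiplying by $\alpha_j^r$ and summing over $j$ and $r$ telescopes into three contributions: $\sum_{r,j}\alpha_j^r p_j(\bar v_r)$ (deterministic), $-\SC(a)$ (from $\ell_r=\sum_j\alpha_j^r b_j$), and $\sum_i C_i(a)$ (from $\ell_r+\tau_r=\sum_j\alpha_j^r f_j$). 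Taking expectations under $\sigma$ and cancelling the $C_i$ terms from both sides of the equilibrium inequality gives $\mb{E}_{a\sim\sigma}[\SC(a)]\le \sum_{r,j}\alpha_j^r p_j(\bar v_r)$. To bound this I use the optimality of $(\bar v_r)$: the deterministic profile $y_{i,k}=\mathbf{1}[a_{i,k}=a^\star_i]$ is feasible in \eqref{eq:cvxprogram} and induces $v_r=|a^\star|_r$, so $\sum_{r,j}\alpha_j^r p_j(\bar v_r)\le \sum_{r,j}\alpha_j^r p_j(|a^\star|_r)$. Finally, the definition of $\rho_{b_j}$ in \eqref{eq:approxfact} (with the convention $p_j(0)=b_j(0)=0$) gives $p_j(x)\le \rho_{b_j}\,xb_j(x)\le (\max_j \rho_{b_j})\,xb_j(x)$ for every $x\in\mb{N}_0$, completing the first bullet.

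For the second bullet I first verify convexity of \eqref{eq:cvxprogram}: since $b_j$ is semi-convex, $xb_j(x)$ is discretely convex on $\mb{N}_0$, hence $p_j(v)=\mb{E}_{P\sim\text{Poi}(v)}[Pb_j(P)]$ is convex in $v\ge 0$ via the identity $\tfrac{d^2}{dv^2}\mb{E}[g(P)]=\mb{E}[g(P+2)-2g(P+1)+g(P)]\ge 0$ for discretely convex $g$; the constraints are linear, so \eqref{eq:cvxprogram} is a convex program. An ellipsoid or interior-point method then computes a $\delta$-approximate minimizer $\hat v$ in polynomial time, after truncating the Poisson series defining $p_j$ to polynomially many terms by standard Poisson tail bounds. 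The main obstacle in this step is the perturbation analysis: re-running the smoothness chain above with $\hat v$ in place of $\bar v$ introduces additive errors that are controlled by Lipschitz bounds on $p_j$ and $f_j$ over the compact range $[0,N]$. Choosing $\delta$ polynomially small in $\varepsilon$ and the input size ensures that the resulting price of anarchy does not exceed $\max_j \rho_{b_j}+\varepsilon$, establishing the second bullet.
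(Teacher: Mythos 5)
Your proposal is correct and follows essentially the same route as the paper's proof: a smoothness-style argument using the mixed deviation $\bar y$, the monotonicity from \cref{lem:propertiesTM}(a), the recursion of \cref{lem:recursion} to reduce everything to $\sum_{r,j}\alpha_j^r \hfun_j(\bar v_r)$, optimality of \eqref{eq:cvxprogram} against the feasible point induced by $a^\star$, and the definition of $\rho_{b_j}$; the convexity and perturbation arguments for the second bullet also match. The only (immaterial) presentational difference is that you argue directly at the level of coarse correlated equilibria, whereas the paper first states the pointwise smoothness inequality and then passes to expectations.
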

\medskip

\begin{proof}{Proof.}
Given a congestion game $G$, we consider the corresponding program \eqref{eq:cvxprogram}. Let us verify that \eqref{eq:cvxprogram} is indeed convex. Since the constraints are linear and the objective function is a sum of univariate functions 
    $\hfun_r$ it suffices to show that each $\hfun_r(v)$ is convex in $v$. This holds true as $\hfun_r$ is defined in \eqref{eq:binExp} as the expectation of a convex function over a Poisson distributed random variable. For completeness we provide a proof of this fact in \Cref{lem:pois-is-cvx} in \cref{app:pois-cvx}.

  \smallskip
    
Let %
$(\bar{y}_i)_{i\in [N]}, (\bar{v}_r)_{r\in\mc{R}}$
be an optimal solution of the convex program \eqref{eq:cvxprogram} 
and consider the taxation mechanism from Definition \ref{def:family} with parameter vector $(\bar{v}_r)_{r\in \mc{R}}$. Denote $\rho=\sup_{\ell\in\mathscr{L}} \rho_{\ell}$.

To complete the proof we will use a smoothness approach, with a crucial modification: Instead of comparing an action profile $a$ (e.g., an equilibrium allocation) with another action profile $a'$ (e.g., an optimal allocation), we will compare an action profile $a$ against a mixed profile $y=(y_1,\dots,y_N)\in\Delta(s_1)\times\dots\times \Delta(s_N)$. Specifically, we will choose the mixed profile $\bar{y}$ solving \eqref{eq:cvxprogram}, and show that 
\be
\sum_{i=1}^N\sum_{k=1}^{s_i} \bar{y}_{i,k}[\bar{C}_i(a)-\bar{C}_i(a'_{i,k}, a_{-i})] \ge \SC(a)-\rho\SC(\opt{a}), \qquad \forall a\in\mc{A}.
\label{eq:smoothness}
\ee
where $\bar{C}_i(a)$ denotes the modified cost function perceived by the \players{} $\bar{C}_i(a)=\sum_{r\in a_i}f_r(|a|_r,\bar{v}_r)$. Once \eqref{eq:smoothness} is shown, the desired bound on the price of anarchy follows readily for pure Nash equilibria and more generally extends all the way to coarse correlated equilibria \citep{roughgarden2015intrinsic}. 
In the former case, substituting the profile $a$ with any pure Nash equilibrium $\NE{a}$, and summing the equilibrium conditions $0\ge \bar{C}_i(\NE{a})-\bar{C}_i(a'_{i,k}, \NE{a}_{-i})$, one obtains  $0\ge \sum_{i=1}^N\sum_{k=1}^{s_i} \bar{y}_{i,k}[\bar{C}_i(\NE{a})-\bar{C}_i(a'_{i,k}, \NE{a}_{-i})]$, so that
\[
0\ge \sum_{i=1}^N\sum_{k=1}^{s_i} \bar{y}_{i,k}[\bar{C}_i(\NE{a})-\bar{C}_i(a'_{i,k}, \NE{a}_{-i})] \ge \SC(\NE{a})-\rho\SC(\opt{a}),%
\]
from which one concludes. Since \eqref{eq:smoothness} holds for all $a\in\mc{A}$, the same bound on the price of anarchy holds for the much broader class of coarse correlated equilibria. To see this, let $\sigma$ be any coarse correlated equilibrium over $\mc{A}_1\times\dots\times\mc{A}_N$, and consider the expected value of \eqref{eq:smoothness}. Due to linearity of the expectation and the definition of coarse correlated equilibria, we have $0\ge \mb{E}_{a\sim\sigma}\left[\sum_{i=1}^N\sum_{k=1}^{s_i} \bar{y}_{i,k}[\bar{C}_i(a)-\bar{C}_i(a'_{i,k}, a_{-i})]\right]$, from which one concludes.

\smallskip
We are thus left to prove the smoothness condition \eqref{eq:smoothness}. Given an optimal allocation $\opt{a}\in\argmin_{a\in\mc{A}}\SC(a)$, let $\opt{v}_r=|\opt{a}|_r$. Inequality \eqref{eq:smoothness} follows from 
\be
\label{eq:finalstep}
\begin{aligned}
\sum_{i=1}^N\sum_{k=1}^{s_i} \bar{y}_{i,k}[\bar{C}_i(a)-\bar{C}_i(a'_{i,k}, a_{-i})]
&=\sum_{i=1}^N \bar{C}_i(a)
 -\sum_{i=1}^N\sum_{k=1}^{s_i}\bar{y}_{i,k}\bar{C}_i(a'_{i,k}, a_{-i})\\
(\text{$\bar{\ell}_r$ is non-decreasing by \Cref{lem:propertiesTM}-b})\quad 
&\ge \sum_{i=1}^N \bar{C}_i(a)
 -\sum_{i=1}^N\sum_{k=1}^{s_i}\bar{y}_{i,k} \sum_{r\in a'_{i,k}} f_r(|a|_r+1,\bar{v}_r)\\
{\text{(changing order of summation)}}\quad 
&= \sum_{r\in a} \left[|a|_r f_r(|a|_r,\bar{v}_r)-\bar{v}_rf_r(|a|_r+1,\bar{v}_r)\right]\\
\text{(recursion in \Cref{lem:propertiesTM}-c)}\quad 
&=\sum_{r\in a}\left[c_r(|a|_r)-\hfun_r(\bar{v}_r)\right]\\
\text{($\bar{v}_r$ optimal solution of \eqref{eq:cvxprogram})}\quad 
&\ge \sum_{r\in a}[c_r(|a|_r)-\hfun_r(\opt{v}_r)]\\
\text{(by def. of $\rho$, it is $\hfun_r(\opt{v}_r)\le \rho \,c_r(\opt{v}_r)$)}\quad
&\ge \sum_{r\in a}[c_r(|a|_r)-\rho c_r(\opt{v}_r)]\\
&= \SC(a)-\rho \SC(\opt{a}),
\end{aligned}
\ee
{thus completing the proof of the first claim.

We now turn attention to the second claim.}
Since \eqref{eq:cvxprogram} is a convex program with polynomially many decision variables and constraints, it can be solved to arbitrary precision in polynomial time. In this case, the argument in \cref{eq:finalstep} will go through with a minor change on the fifth line, where one pays a multiplicative factor $1+\xi$ with $\xi>0$. Correspondingly, we obtain a price of anarchy of $(1+\xi)\rho$ in place of $\rho$. Selecting $\xi$ sufficiently small, one obtains a price of anarchy of $(1+\xi)\rho \le \rho+\varepsilon$ for any choice of $\varepsilon>0$.
\hfill\qed
\end{proof}

{We conclude observing that the expression of $\hfun_r(v)$ in \eqref{eq:binExp} can be computed analytically for many commonly studied classes of resource costs. Nevertheless, if this is not the case, one can approximate $\hfun_r(v)$ to arbitrary precision by truncating the sum at a finite value. One can then verify the same properties shown in  \cref{lem:propertiesTM} and apply a reasoning identical to that in \eqref{eq:finalstep}. An additional multiplicative error will arise in \eqref{eq:finalstep}, although this can be made \mbox{as small as desired.}}

\subsection{Comparison with existing approximations}
\label{subsec:comparison-sviri}
A strength of the approach followed thus far is that optimally designed taxes can be used to derive polynomial time algorithms matching the hardness factor. This can be done relying on existing algorithms, e.g., no-regret dynamics, as discussed in \cref{subsec:contributions}. For this reason, we now turn the attention to comparing the optimal price of anarchy of \cref{thm:opttaxes} with the best known approximation ratio of \cite{Makarychev18}. %
Specifically, when all resource costs are identical to $\ell:\mb{R}_{\ge0}\rightarrow \mb{R}_{\ge0}$, \cite{Makarychev18} give a randomized algorithm with an approximation ratio of $\mu_\ell = \sup_{x\in \mb{R}_{>0}} {\mb{E}_{P\sim\text{Poi}(1)} [(xP) \ell(xP)]}/{(x\ell(x))}$. While their result applies to the broader class of optimization problems with a ``diseconomy of scale'', the approximation ratio in \cref{cor:polyalgo} always matches or strictly improves upon theirs. This follows from
\[
\rho_\ell = 
\sup_{x\in\mb{N}} \frac{\mb{E}_{P\sim\text{Poi}(x)}[P\ell({P})]}{x\ell(x)}
\le  
\sup_{x\in\mb{N}} \frac{\mb{E}_{P\sim\text{Poi}(1)} [(xP) \ell(xP)]}{x\ell(x)}
\le  
\mu_\ell.
\] 
The last inequality follows trivially by replacing $\mb{N}$ with $\mb{R}_{>0}$ and using the definition of $\mu_\ell$, while the first inequality can be shown using the notion of convex ordering between distributions.\footnote{To see this, we leverage the result in \citep[Thm 3.A.36]{shaked2007stochastic} with $X_i,Y\sim x\text{Poi}(1)$ independent,  $a_i=1/x$, and $i=1,\dots,x\in\mb{N}$. Theorem 3.A.36 in \citep{shaked2007stochastic} applies  %
ensuring that $\sum_{i=1}^x a_i X_i\le_{\text{cx}}Y$, where $\le_{\text{cx}}$ denotes the convex ordering between distributions. Thanks to the choice of $X_i,Y,a_i$, and to the fact that the sum of $x$ independent Poisson random variables with distribution $\text{Poi}(1)$ is itself a Poisson random variable with distribution $\text{Poi}(x)$, the statement $\sum_{i=1}^x a_i X_i\le_{\text{cx}}Y$ reads as $\text{Poi}(x)\le_{\text{cx}} x\text{Poi}(1)$ which implies $\mb{E}_{P\sim\text{Poi}(x)}[P\ell({P})] \le \mb{E}_{P\sim\text{Poi}(1)} [(xP)\ell(xP)]$, as $x\ell(x)$ is convex.

} %
An example where the approximation ratios coincide is given by $\ell(x)=x^d$, in which case $\rho_\ell=\mu_\ell$ equal the $(d+1)$'st Bell number, while an instance where the inequality is strict is provided by $\ell(x)=x+1$, in which  case $\rho_\ell=\sup_{x\in\mb{N}}{(x+2)}/{(x+1)}=3/2<2=\sup_{x\in \mb{R}_{>0}}{(2x+1)}/{(x+1)}=\mu_\ell.$

\section{Discussion and Conclusions}

{Congestion games provide a fundamental framework to study decision-making in the presence of congestion effects. Prior to our work, little was known regarding the classical problem of minimizing the \socialcost in congestion games, both in the case where we allow for centralized decision-making and for the case where \players{} can be coordinated only indirectly through the use of interventions.

In these settings, we provide conclusive answers on what approximations can be achieved by efficient (i.e., polynomial time) algorithms, and what are the inherent computational limitations. Interestingly, our technical results show that \emph{no} performance degradation arises when moving from centralized decision-making to the use of interventions. On the contrary, judiciously designed taxation mechanisms can be efficiently computed and achieve the same performance of the best centralized polynomial time algorithm. We achieve this result by providing a tight computational lower bound for the problem of minimizing the \socialcost, and by designing suitable taxation mechanisms with matching performance. We thus obtain polynomial time algorithms based on taxes matching the hardness factor.}

There remain many opportunities for further work on interventions in congestion games. One important research direction is to shed further light on the interplay between (more general) interventions and the achievable performances. It is interesting to understand whether other approaches based on, e.g., information provisioning, are equally powerful. %
In other words, does the positive result obtained here for taxation mechanisms hold for other classes of interventions?

There are also a number of open questions arising from our work and  possible refinements thereof. 
{A first direction is that of considering the {\it weighted} version of congestion games, where \players{} increase the congestion of a resource depending on their weight. While the hardness result we obtained in \cref{thm:hardness} holds for this more general class of problems, the corresponding hardness factor can be strengthened (i.e., made larger) by considering a partitioning system where all \players{} share a common weight $w$, not necessarily equal to one. On the other hand, the question of designing taxation mechanisms with optimal performance is significantly more challenging, and we leave this open as a direction for future work.}
A second direction} is that of considering the variant of \emph{network congestion games}, whereby each strategy set $\mc{A}_i$ is implicitly given as the set of paths connecting an origin/destination node in an underlying graph. 
On its own, this more succinct representation of the strategy space would only increase the computational complexity, but on the other hand the graph also imposes more structure.
The results in \cref{thm:opttaxes} (design of optimal taxes) extend to network congestion games by replacing the constraint set in the convex program in \eqref{eq:cvxprogram} with the set of feasible flows on the graph. Similarly, \cref{cor:polyalgo} (polynomial time algorithms) also extends provided that one utilizes no-regret algorithms, such as Follow the Perturbed Leader, that do not require explicit description of all possible paths (which might be exponential in the size of the graph). On the contrary, our reduction in the proof of \cref{thm:hardness} (inapproximability of minimum cost) induces a general congestion game. The existence of a reduction to network congestion games remains open.

\bibliographystyle{ACM-Reference-Format}
\bibliography{references.bib}

\newpage
\begin{appendices}
\section{Appendix}

{As in the main body of the manuscript, throughout the appendix we extend the domain of definition of all resource costs from $\mb{N}$ to $\mb{N}_0$ by setting their value to zero.  This is without loss of generality, and only used to ease the notation, as already clarified on page \cpageref{foot:simplify_b(0)} and \cref{foot:simplify_b(0)}.} %

\subsection{Additional material for Section~\ref{sec:hardness}}
\subsubsection{Binomial expected value converges to Poisson expected value}
\label{app:binomial-poisson}
\begin{lemma}
\label{lem:convergence}
Let $\ell:\mb{N}\rightarrow\mb{R}_{\ge0}$ satisfy {Assumption~\ref{ass:assumption}}, and $\rho_\ell<\infty$. Then $\forall k\in\mb{N}$
\[
\lim_{h\to\infty}\mb{E}_{X\sim \emph{Bin}(h,\el/h)} [X\ell(X)] = 
\mb{E}_{X\sim \emph{Poi}(\el)} [X\ell(X)].
\]
\end{lemma}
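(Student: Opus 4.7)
Setting $\phi(x)=xb(x)$, which is convex on $\mb{N}_0$ by the semi-convexity of $b$, the claim rewrites as $\mb{E}[\phi(X_h)]\to\mb{E}[\phi(X_\infty)]$ with $X_h\sim\text{Bin}(h,k/h)$ and $X_\infty\sim\text{Poi}(k)$. Note that the target expectation is finite: it is at most $\rho_b\cdot kb(k)$, and $\rho_b<\infty$ by hypothesis. My plan is to establish matching $\limsup$ and $\liminf$ bounds, coming respectively from convex stochastic ordering and from pointwise pmf convergence combined with truncation.

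For the $\limsup$ bound I would argue $X_h\leq_{\mathrm{cx}}X_\infty$, so that $\mb{E}[\phi(X_h)]\leq\mb{E}[\phi(X_\infty)]$ holds for every $h$. Writing $X_h=\sum_{i=1}^h B_i$ with $B_i$ iid Bernoulli$(k/h)$ and representing $X_\infty$ in distribution as $\sum_{i=1}^h Y_i$ with $Y_i$ iid Poi$(k/h)$, the convex-order preservation under independent sums reduces everything to checking Bernoulli$(p)\leq_{\mathrm{cx}}$ Poi$(p)$ for each $p\in[0,1]$. The latter is a one-parameter calculation: both distributions share mean $p$, and for every convex $\psi$ the map $p\mapsto \mb{E}[\psi(\text{Poi}(p))]-[(1-p)\psi(0)+p\psi(1)]$ vanishes at $p=0$ with non-negative derivative (which can be verified by differentiating the Poisson series term by term and grouping).

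For the $\liminf$ bound I would invoke the classical pointwise convergence of pmfs, $\binom{h}{j}(k/h)^j(1-k/h)^{h-j}\to e^{-k}k^j/j!$ as $h\to\infty$ for each fixed $j\in\mb{N}_0$. For any cutoff $M\in\mb{N}$, the finite sum $\sum_{j=0}^M \phi(j)\Pr[X_h=j]$ is a sum of finitely many converging terms, and since $\phi\geq 0$ this yields
\[
\liminf_{h\to\infty} \mb{E}[\phi(X_h)] \;\geq\; \sum_{j=0}^M \phi(j)\Pr[X_\infty=j].
\]
Sending $M\to\infty$ and using monotone convergence on the right-hand side together with the finiteness of $\mb{E}[\phi(X_\infty)]$ gives $\liminf_h \mb{E}[\phi(X_h)]\geq \mb{E}[\phi(X_\infty)]$, which combined with the $\limsup$ bound closes the argument.

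The main obstacle I anticipate is the convex ordering step: while the Poisson limit theorem immediately delivers pmf convergence, the function $\phi$ is unbounded and may grow arbitrarily fast, so naive truncation alone cannot control the tails uniformly in $h$. Convex ordering is exactly the tool that supplies this uniform control without imposing any extra moment or growth assumption on $b$ beyond the hypothesis $\rho_b<\infty$ (which anyway guarantees the limiting expectation is finite).
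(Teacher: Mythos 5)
Your proof is correct and follows essentially the same route as the paper's: a Fatou-type lower bound from pointwise pmf convergence (your truncation-and-send-$M\to\infty$ step is Fatou's lemma spelled out) combined with the uniform upper bound $\mb{E}[\phi(X_h)]\le\mb{E}[\phi(X_\infty)]$ coming from the convex ordering $\text{Bin}(h,k/h)\le_{\text{cx}}\text{Poi}(k)$. The only difference is in how that ordering is justified: you tensorize $\text{Ber}(p)\le_{\text{cx}}\text{Poi}(p)$ over independent summands, while the paper cites the unimodal-likelihood-ratio criterion from Shaked and Shanthikumar; both justifications are valid.
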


\begin{proof}{Proof.}
{We} begin noting that the limiting operation is delicate since we do not want to assume boundedness of $x\ell(x)$ for $x\rightarrow \infty$, 
as this would disqualify interesting cases such as that of polynomials. %
Since $\rho_\ell<\infty$ by assumption, then $\mb{E}_{X\sim \text{Poi}(\el)} [X\ell(X)]<\infty$ for any $k\in\mb{N}$. Fix $k$, and let $f_h(i)$ and $f(i)$ denote the probability mass function corresponding to $\text{Bin}(h,k/h)$ and $\text{Poi}(k)$, and recall that $f_h(i)\rightarrow f(i)$ for all $i$. Define $c(x)=x\ell(x)$ for $x\in\mb{N}$ and $c(0)=0$. \mbox{The result follows from}
\[
\begin{split}
\mb{E}_{X\sim \text{Poi}(k)}[c(X)]
&\!=\!\!
\sum_{i=0}^\infty \lim_{h\rightarrow\infty }f_h(i)c(i)
\!
\le 
\!\!
\lim_{h\rightarrow\infty}\sum_{i=0}^\infty f_h(i)c(i) 
\!
=
\!
\lim_{h\rightarrow\infty}\!
\mb{E}_{X\sim  \text{Bin}(h,k/h)}[c(X)]
\!\le
\!
\mb{E}_{X\sim \text{Poi}(k)}[c(X)],
\end{split}
 \]
 where the first equality holds by definition of expected value and by replacing $f(i)=\lim_{h\rightarrow\infty}f_h(i)$. The following inequality holds by Fatou's lemma and existence of the limit (which hold as this is a series with non-negative terms). As a result, we can interchange the limiting operation with the infinite sum. The next equality is due to the definition of expected value. The last inequality is a consequence of the fact that $c$ is a convex function and the Poisson distribution $\text{Poi}(k)$ dominates the binomial distribution $\text{Bin}(h,k/h)$ in the sense of the convex ordering \citep{shaked2007stochastic}, ensuring that $\mb{E}_{X\sim  \text{Bin}(h,k/h)}[c(X)]\le \mb{E}_{X\sim \text{Poi}(k)}[c(X)]$ for all $h$. {One way to see this is to utilize the fact that the ratio between the probability mass functions $f_h(i)/f(i)$ is unimodal as shown in \cite[Sec 2.7]{klenke2010stochastic}, and that $f_h$ and $f$ are not ordered by the usual stochastic order, thus concluding thanks to \cite[Thm. 3.A.53]{shaked2007stochastic}. Alternatively, one can compare the two expectations directly.} 
\hfill\qed
\end{proof}

\subsubsection{Proof of \cref{prop:partition-exists}}
\label{subsub:app-proof-partition-exists}
\begin{proof}{Proof.}
Existence of a partitioning system is proved through a probabilistic approach similarly to that in \cite{Feige98}. The idea is to construct each $\mc{P}_i$ independently from a uniform distribution. Formally, each element in $[n]$ is assigned to $\el$ of the $P_{j,i}$ uniformly at random. This ensures that, by construction, each element in $[n]$ appears in exactly $\el$ different sets of $\mc{P}_j$. Thus the first property of the partitioning system holds trivially. 
Define $c(x)=x\ell(x)$ for $x\in\mb{N}$ and $c(0)=0$. In order to prove the second property, we fix $B\subseteq [\beta]$ with $|B|=h$ and correspondingly consider $Q=\{P_{j,i(j)}, j\in B\}$. We intend to show that with high probability $\sum_{r\in Q} c(|Q|_r)\ge \left(\mb{E}_{X\sim \text{Bin}(h,\el /h)}[c(X)]-\eta\right)n$ holds. This would imply that, among all possible ways of constructing $\{\mc{P}_i\}_{i=1}^r$ there exists at least one that satisfies the property. To prove that $\sum_{r\in Q} c(|Q|_r)\ge \left(\mb{E}_{X\sim \text{Bin}(h,\el /h)}[c(X)]-\eta\right)n$ holds with high probability, we compute the expected cost that arises from the probabilistic choice outlined above for $\{\mc{P}_i\}_{i=1}^r$. In particular 
\[
\mb{E}[c(|Q|_r)]=\mb{E}_{X\sim \text{Bin}(h,k/h)}[c(X_r)],
\]
since the number of sets in which each resource appears is given by the random variable $X_r=\sum_{j\in B} \mathbbm{1}_{r\in P_{j,i(j)}}\sim\text{Bin}(h,\el/h)$, owing to the fact that $\mathbbm{1}_{r\in P_{j,i(j)}}\sim\text{Ber}(k/h)$ are independent (here $\mathbbm{1}$ denotes the indicator function).
We then use Chernoff-Hoeffding bound on the total cost $\sum_{r\in[n]}c(|Q|_r)$, where each term is bounded by $0\le c(|Q|_r) \le c(h)$ owing to the non-decreasingness and non-negativity of $c$.%
\footnote{Observe that the random variables $\{c(|Q|_r)\}_{r\in[n]}=\{c(X_r)\}_{r\in[n]}$ are negatively associated, which is enough to conclude thanks to \cite{dubhashi1996balls} and $\eta\in(0,1)$.  Since $c(X_r)=c(\sum_{j\in B}X_{r,j})$, where  $X_{r,j}=\mathbbm{1}_{r\in P_{j,i(j)}}$, is non-decreasing in $\{X_{r,j}\}_{j\in[\ralph]}$, negative association of $\{c(X_r)\}_{r\in[n]}$ can be shown by proving negative association of $\{X_{r,j}\}_{r\in [n], j\in[\ralph]}$ \cite[Property 6]{joag1983negative}.
For fixed $j\in[\ralph]$, the variables $\{X_{r,j}\}_{r\in [n]}$ are negatively associated as they are a permutation distribution of $(0,\dots,0,1,\dots,1)$ with $n-kn/h$ zeros and $kn/h$ ones \cite[Def. 2.10 and Thm. 2.11]{joag1983negative}. Owing to this, and thanks to \cite[Property 7]{joag1983negative} negative association of $\{X_{r,j}\}_{r\in [n], j\in[\ralph]}$ follows from the above, and from the fact that $\{X_{r,j}\}_{r\in [n]}$ are mutually independent.}
 Thus, with probability smaller or equal to $2e^{{-2n\eta^2}/{(c(h))^2}}$, it is  
$
|\sum_{r\in[n]}c(|Q|_r)-\mb{E}_{X\sim \text{Bin}(h,k/h)}[c(X)]|\ge \eta n.
$
Since there are $\binom{\ralph}{h}\cdot h^h\le (1+h)^\ralph$ possible choices for $B$ and $Q$, a union bound guarantees that with probability higher than   $1-2(1+h)^\beta\cdot e^{{-2n\eta^2}/{(c(h))^2}}$, the cost of all $B,Q$ satisfies $|\sum_{r\in[n]}c(|Q|_r)-\mb{E}_{X\sim \text{Bin}(h,k/h)}[c(X)]|<\eta n$. With the specific choice of $n$ as in the statement, we are guaranteed this property with a probability of at least $4/5$. This shows that a partitioning system always exists. One such object can be computed by simple enumeration over all possible choices, which are only a function of $h$, $n$ and $\ralph$.
\hfill\qed
\end{proof}

\subsubsection{Hardness factor for polynomial resource cost}
\label{app:factor-poly}
{
\begin{lemma}
\label{lem:factor_polynomials}	
Let $d\in\mathbb{N}_0$, and let $\ell:\mb{N}\rightarrow \mb{R}_{\ge0}$ be $\ell(x)=\sum_{j=0}^d \alpha_j x^j$ where $\alpha_j\ge0$ for all $j$. Let $m:\mb{N}\rightarrow \mb{R}_{\ge0}$ be $m(x)=x^d$. Define $\rho_\ell$, $\rho_m$ as in \eqref{eq:approxfact}. Then $\rho_\ell\le \rho_m$.
\end{lemma}
\begin{proof}{Proof.}
Using the definition of $\rho_\ell$, $\rho_m$, we need to show that
\[
\rho_\ell=\sup_{x\in\mb{N}}\frac{\mb{E}_{P\sim \text{Poi}(x)}[\sum_{j=0}^d \alpha_j P^{j+1}]}{\sum_{j=0}^d \alpha_j x^{j+1}} \le \sup_{x\in\mb{N}}\frac{\mb{E}_{P\sim \text{Poi}(x)}[P^{d+1}]}{x^{d+1}}=\rho_{m}.
\]
Towards this goal, we will prove that for any $x\in\mb{N}$, it is %
$
\sum_{j=0}^d \alpha_j \mb{E}_{P\sim \text{Poi}(x)}\left[P^{j+1}\right]x^{d+1} \le \sum_{j=0}^d \alpha_j x^{j+1} \, \mb{E}_{P\sim \text{Poi}(x)}[P^{d+1}]$, from which one concludes by rearranging and taking the supremum on both sides. We will do so term by term, showing that for all $j\in\{0,\dots,d\}$ it is
\[
\mb{E}_{P\sim \text{Poi}(x)}\left[P^{j+1}\right]x^{d-j} \le \mb{E}_{P\sim \text{Poi}(x)}[P^{d+1}].
\] Towards this goal, we recall that the $(j+1)$'st moment of the Poisson distribution $\text{Poi}(x)$ equals $\sum_{i=0}^{j+1}x^i {j+1 \brace i}$, where ${j+1 \brace i}$ is a Stirling number of the second kind \cite[p. 63]{mansour2015commutation}. Hence,
\[
\mb{E}_{P\sim \text{Poi}(x)}\left[P^{j+1}\right]x^{d-j} = \sum_{i=0}^{j+1}x^{i+d-j} {j+1 \brace i} = \sum_{i=d-j}^{d+1}x^{i} {j+1 \brace i-(d-j)} \le \sum_{i=0}^{d+1}x^{i} {d+1 \brace i} = \mb{E}_{P\sim \text{Poi}(x)}\left[P^{d+1}\right],
\]
where the key inequality ${j+1 \brace i-(d-j)} \le  {d+1 \brace i} $ holds by the recurrence relation defining the Stirling numbers, and the fact that $j\le d$ \cite[Thm 1.17]{mansour2015commutation}.\hfill\qed
\end{proof}
}

\subsection{Additional material for Section~\ref{sec:taxes}}

\subsubsection{Convexity of Poisson expected value}
\label{app:pois-cvx}
\begin{lemma}
\label{lem:pois-is-cvx}
Let $\ell:\mb{N}\rightarrow \mb{R}_{\mygezero}$ satisfy {Assumption~\ref{ass:assumption}} and  $\rho_\ell$ in \eqref{eq:approxfact} be finite. Define $\hfun:\mb{R}_{\ge0}\rightarrow \mb{R}_{\ge0}$ as
\[
\hfun(v)=\mb{E}_{P\sim\emph{Poi}(v)}[P\ell(P)] = \left(\sum_{i=0}^\infty ib(i)\frac{v^i}{i!}\right)e^{-v}.
\]	
Then, $\hfun$ is convex and differentiable infinitely many times in $\mb{R}_{\ge0}$.
\end{lemma}
\begin{proof}{Proof.}
We start by showing that if $\rho_\ell<\infty$, then $\hfun(v)$ is well defined, i.e., the series converges to a finite value for any choice of $v\in\mb{R}_{\ge0}$ (as standard, $\mb{R}$ does not include infinity). For $v=0$, this is immediate. For $v>0$, define $c(v)=v\ell(v)$ for $v\in\mb{N}$ with $c(0)=0$, and observe that $\rho_\ell<\infty$ implies $\hfun(v)/c(v)<\infty$ for all fixed $v\in\mb{N}$ thanks to the definition of $\rho_\ell$, so that $\hfun(v)<\infty$ since $0<c(v)<\infty$ for finite $v\in\mb{N}$. Therefore also $\sum_{i=0}^\infty c(i)\frac{v^i}{i!}<\infty$ for all $v\in\mb{N}$. Since $\sum_{i=0}^\infty c(i)\frac{v^i}{i!}$ is increasing in $v\ge0$, boundedness over the naturals, immediately implies boundedness of the same expression over the non-negative reals. Exploiting the fact that also $e^{-v}$ is bounded, we have shown that $\hfun(v)$ is well defined, and converges to a finite value for any choice of $v\in\mb{R}_{\ge0}$. As a result $\hfun$ is differentiable infinitely many times in its domain, since it is the product of a convergent power series, and of $e^{-v}$. We can therefore prove convexity of $p(v)$ by computing its second order derivative and verifying that it is non-negative. The first derivative reads as 
\[
\begin{split}
\hfun'(v)
=&-e^{-v}\left(\sum_{i=0}^\infty c(i)\frac{v^i}{i!}\right)+e^{-v}\left(\sum_{i=0}^\infty c(i)\frac{i v^{i-1}}{i!}\right)\\
=&-e^{-v}\left(\sum_{i=0}^\infty c(i)\frac{v^i}{i!}\right)+e^{-v}
\left(\sum_{i=0}^\infty c(i+1)\frac{ v^{i}}{i!}\right)\\
=&%
e^{-v}\sum_{i=0}^\infty \frac{v^i}{i!}\Delta c(i),
\end{split}
\]
where we defined $\Delta c(i)=c(i+1)-c(i)$. Following an identical approach the second derivative is 
\[
\hfun''(v)=e^{-v}\sum_{i=0}^\infty \frac{v^i}{i!}[\Delta c(i+1)-\Delta c(i)]. 
\]
The summand corresponding to $i=0$ reads as $c(2)-2c(1)=2(\ell(2)-\ell(1))>0$ since $\ell$ is non-decreasing. The summands corresponding to $i\ge 1$ are non-negative as semi-convexity of $\ell$ implies convexity of $c$. Hence, $\hfun''(v)\ge0$ for all $v\in\mb{R}_{\ge0}$ as desired.
\hfill\qed
\end{proof}

\subsubsection{Proof of Lemma \ref{lem:propertiesTM}}
\label{app:twoproperties}
\hfill\\
{\Cref{lem:propertiesTM} consists of three parts. We
first prove part (a), then (b), and finally (c).}

\medskip
\begin{proof}{$\triangleright$ Proof of Part (a).}
We will show that $f_r(x,v)\ge \ell_r(x)$ for any $x\in\mb{N},v\in\mb{R}_{\ge0}, r\in\mathcal{R}$, as this suffices to conclude since $\tau_r(x,v_r)=f_r(x,v)- \ell_r(x)$. We do so separately for each resource, and thus drop the index $r$ in the following. The case of $v=0$ follows readily, since $f(x,0)=\ell(x)$. Similarly, for $x=0$, it is $f(0,v)=0\ge \ell(0)=0$. In the remaining cases, we are left to prove 
\[
f(x,v)=\frac{(x-1)!}{v^x} \sum_{i=0}^{x-1} \frac{\hfun(v)-i\ell(i)}{i!}v^i\ge \ell(x)\qquad x\in\mb{N}, v\in\mb{R}_{>0}.
\]	
Define $c(x)=xb(x)$ for $x\in\mb{N}$ with $c(0)=0$. The latter inequality holds if  
$
\hfun(v)\ge ({\sum_{i=0}^x c(i)\frac{v^i}{i!}})/({\sum_{j=0}^{x-1}
\frac{v^j}{j!}})
$
for all $x\in\mb{N}$, $v\in \mb{R}_{>0}$. We complete the proof showing that the right hand side in the previous expression is non-decreasing in $x$, so that the desired property holds for any $x\in\mb{N}$ if it holds for $x$ arbitrarily large, i.e., if
\[
\hfun(v)\ge \lim_{x\rightarrow \infty} \frac{\sum_{i=0}^x c(i)\frac{v^i}{i!}}{\sum_{j=0}^{x-1}
\frac{v^j}{j!}
},\qquad \forall v\in \mb{R}_{>0},
\]
which is indeed satisfied (with equality), as it reduces to $\hfun(v)\ge (\hfun(v)e^v)/e^v$, owing to the definition of $\hfun(v)$ and $\sum_{j=0}^\infty \frac{v^j}{j!}=e^v$. To conclude, we thus need to prove that for all $x\in\mb{N}$, $v\in\mb{R}_{>0}$, it is 
\[
\begin{split}
\frac{\sum_{i=0}^{x+1} c(i)\frac{v^i}{i!}}{\sum_{j=0}^{x}\frac{v^j}{j!}} 
\ge
\frac{\sum_{i=0}^{x} c(i)\frac{v^i}{i!}}
{\sum_{j=0}^{x-1}\frac{v^j}{j!}}
&\iff 
\sum_{i=0}^{x+1}\sum_{j=0}^{x-1} c(i)\frac{v^{i+j}}{i!j!}
\ge
\sum_{i=0}^{x}\sum_{j=0}^{x} c(i)\frac{v^{i+j}}{i!j!} \\
&\iff 
\sum_{j=0}^{x-1} c(x+1) \frac{v^{x+1+j}}{(x+1)!j!} 
\ge
\sum_{i=0}^x c(i) \frac{v^{i+x}}{i!x!}\\
& \iff 
\sum_{j=1}^{x} c(x+1) \frac{v^{x+j}}{(x+1)!(j-1)!} 
\ge
\sum_{i=1}^x c(i) \frac{v^{i+x}}{i!x!}\\
&\iff
\sum_{j=1}^x \left[\frac{c(x+1)}{(x+1)!(j-1)!} - \frac{c(j)}{x!j!}\right]v^{j+x} \ge0\\
&\iff \sum_{j=1}^x [\ell(x+1)-\ell(j)]v^{j+x}\ge0%
\end{split}
\]
which follows from the above chain of implications, and the fact that $\ell(x)$ is non-decreasing.
\hfill\qed
\end{proof}

\medskip

\begin{proof}{$\triangleright$ Proof of Part (a).}
To ease the notation, we drop the subscript $r$, i.e., we show that $f(x+1,v)\ge f(x,v)$ for all $x\in\mb{N}_0$ and $v\in\mb{R}_{\ge 0}$. When $v=0$, then $f(x,0)=\ell(x)$, which is non-decreasing as $\ell(x)$ is so. Thus, in the following we restrict to the case of $v> 0$. When, in addition, $x=0$ the inequality reduces to $f(1,v)\ge f(0,v) \iff \hfun(v)/v\ge0$, which holds as $\hfun(v)>0$ for $v>0$. We are thus left to consider the case of $x\in\mb{N}$ and $v\in\mb{R}_{> 0}$.
Substituting the expression for $f(x+1,v)$ and $f(x,v)$ in the latter inequality and isolating $\hfun(v)$ results in 
\be
\label{eq:nondecr-ineq}
\hfun(v)\left(x \sum_{i=0}^x \frac{v^i}{i!} - v\sum_{i=0}^{x-1}\frac{v^i}{i!}\right)\ge 
\left(x \sum_{i=0}^x \frac{c(i)}{i!}v^i - v\sum_{i=0}^{x-1}\frac{c(i)}{i!}v^i\right),
\ee
where we defined $c(i)=i\ell(i)$ for $i\in\mb{N}$ and $c(0)=0$.
The term in brackets on the left hand side can be equivalently written as
\[
\sum_{i=0}^x x\frac{v^i}{i!} - \sum_{i=0}^{x-1}\frac{v^{i+1}}{i!}
=
\sum_{i=0}^x \frac{x-i}{i!}v^i.  
\]
Similarly, the term in the right hand side brackets, is equivalent to
\[
\sum_{i=1}^x x\frac{c(i)}{i!}v^i - \sum_{i=0}^{x-1}\frac{c(i)}{i!}v^{i+1} 
=
\sum_{i=1}^x \frac{x c(i)-i c(i-1)}{i!}v^i.
\]
Thus, inequality \eqref{eq:nondecr-ineq} reduces to 
\be
\hfun(v)\ge 
\frac
{\sum_{i=1}^x \frac{x c(i)-i c(i-1)}{i!}v^i}
{\sum_{i=0}^x \frac{x-i}{i!}v^i},
\label{eq:h_gek}
\ee
whereby we used the fact that the denominator is positive since $x\ge1$ and $x\ge i$.
Thus, we require \eqref{eq:h_gek} to hold for all $x\in\mb{N}$ and $v\in\mb{R}_{> 0}$.
\Cref{lem:non-decr}, ensures that the right hand side of the previous inequality is non-decreasing in $x\in\mb{N}$, for each fixed $v\in\mb{R}_{> 0}$. Therefore, \eqref{eq:h_gek} holds, if it holds when $x$ is arbitrarily large, that is if
\be
\hfun(v)\ge \lim_{x\to\infty} \frac{\sum_{i=0}^x\frac{c(i)}{i!}v^i}{\sum_{i=0}^x\frac{v^i}{i!}}.
\label{eq:final_limit}
\ee
Notice, though, that the right hand side in the last expression is precisely equal to $\frac{\hfun(v)e^v}{e^v}=\hfun(v)$, thanks to the definition of $\hfun(v)$ in \eqref{eq:binExp} and to the fact that $\sum_{i=0}^\infty\frac{v^i}{i!}=e^v$. 
Therefore \eqref{eq:final_limit} holds (with equality), which completes the proof of part (a).
\hfill\qed
   \end{proof}

\medskip
\begin{proof}{Proof of Part (c).}
Using the definition of $f_r$ in \eqref{eq:f-designed}, 
observe that
\begin{align*}
   v f_jr(x+1,v) - x f_r(x,v)
   &= v\frac{x!}{v^{x+1}}\sum_{i=0}^{x} \frac{\hfun_r(v)-i\ell_r(i)}{i!}{v^i}
     - x\frac{(x-1)!}{v^x}\sum_{i=0}^{x-1} \frac{\hfun_r(v)-i\ell_r(i)}{i!}{v^i}\\
   &= \frac{x!}{v^{x}} \frac{\hfun_r(v)-x\ell_r(x)}{x!}{v^x}\\
   &= \hfun_r(v)-x\ell_r(x),
\end{align*}
or equivalently $x \ell_r(x)-x f_r(x,v)+v f_r(x+1,v)=\hfun_r(v)$ as needed.
\hfill\qed
\end{proof}

\subsubsection{Technical Lemma used to prove Lemma \ref{lem:propertiesTM}}
\begin{lemma}
\label{lem:non-decr}
Let $c:\mb{N}\rightarrow\mb{R}_{\ge0}$ be convex. Then, the function $g:\mb{N}\times\mb{R}_{>0}\rightarrow \mb{R}_{\ge0}$, 
\[
g(x,v)=\frac
{\sum_{i=1}^x \frac{x c(i)-i c(i-1)}{i!}v^i}
{\sum_{i=0}^x \frac{x-i}{i!}v^i},
\]
where we set $c(0)=0$, is non-decreasing for all $x\in\mb{N}$, for any fixed $v\in\mb{R}_{>0}$.
\end{lemma}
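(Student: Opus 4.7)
The plan is to prove $g(x{+}1,v)\ge g(x,v)$ via cross-multiplication, reducing the inequality to a single sum of differences of $c$ that can be bounded directly using convexity.

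Set $w_i:=v^i/i!$, which satisfies $i w_i = v w_{i-1}$, and introduce $A(x):=\sum_{i=0}^x w_i c(i)$ and $B(x):=\sum_{i=0}^x w_i$. A direct manipulation of the sums defining $g$ yields $g(x,v)=\alpha_x/\beta_x$ where $\alpha_x := x A(x) - v A(x-1)$ and $\beta_x := x B(x) - v B(x-1) = \sum_{i=0}^x (x-i) w_i$, so $\beta_x>0$ for $x\ge 1$. The monotonicity claim is equivalent to $(\alpha_{x+1}-\alpha_x)\beta_x \ge \alpha_x(\beta_{x+1}-\beta_x)$.

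Using $A(x+1)=A(x)+w_{x+1}c(x+1)$, $B(x+1)=B(x)+w_{x+1}$, together with $(x+1)w_{x+1}=v w_x$, a short calculation gives $\alpha_{x+1}-\alpha_x = A(x) + v w_x[c(x+1)-c(x)]$ and $\beta_{x+1}-\beta_x = B(x)$. Substituting and rearranging reduces the problem to showing
\[
\alpha_x B(x) - A(x)\beta_x \;\le\; v w_x\,[c(x+1)-c(x)]\,\beta_x.
\]

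The decisive step is the algebraic identity $\alpha_x B(x) - A(x)\beta_x = v w_x \sum_{i=0}^{x-1} w_i[c(x)-c(i)]$. To obtain it, expand $\alpha_x B(x)-A(x)\beta_x = v[A(x)B(x-1)-A(x-1)B(x)]$ and telescope via $A(x)=A(x-1)+w_x c(x)$, $B(x)=B(x-1)+w_x$, giving $v w_x[c(x)B(x-1)-A(x-1)] = v w_x\sum_{i=0}^{x-1}w_i[c(x)-c(i)]$ by the definitions of $A,B$. Cancelling the positive factor $v w_x$ (using $x\ge 1$, $v>0$) leaves
\[
\sum_{i=0}^{x-1} w_i[c(x)-c(i)] \;\le\; (c(x+1)-c(x))\,\beta_x.
\]

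Convexity of $c$ closes the argument: the forward differences $c(j+1)-c(j)$ are non-decreasing in $j$, so for every $i\le x-1$,
\[
c(x)-c(i) \;=\; \sum_{j=i}^{x-1}[c(j+1)-c(j)] \;\le\; (x-i)\,[c(x+1)-c(x)].
\]
Summing against $w_i$ and appending the vanishing $i=x$ term bounds the left-hand side by $(c(x+1)-c(x))\sum_{i=0}^{x}(x-i)w_i = (c(x+1)-c(x))\beta_x$, as required. The main obstacle is locating the identity $\alpha_x B(x) - A(x)\beta_x = v w_x \sum_{i=0}^{x-1} w_i[c(x)-c(i)]$: it is the single ingredient that collapses the cross-product of Poisson-type sums into a form where the convexity estimate becomes elementary.
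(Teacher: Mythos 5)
Your proof is correct and follows essentially the same route as the paper's: cross-multiplying $g(x+1,v)\ge g(x,v)$ and reducing to the single inequality $\sum_{i=0}^{x-1}\left[c(i)-c(x)+(x-i)(c(x+1)-c(x))\right]\frac{v^i}{i!}\ge 0$, which is then verified termwise from convexity (the paper reaches the identical inequality by re-indexing the double sum over $\hfun=i+j$, whereas your telescoping of $A(x)B(x-1)-A(x-1)B(x)$ is a tidier way to do the same bookkeeping). One minor remark: your bound for the $i=0$ term uses $c(1)-c(0)\le c(x+1)-c(x)$, i.e.\ convexity of $c$ including the first difference, which is legitimate under the lemma's stated hypothesis and is exactly the point the paper instead covers by invoking monotonicity of $b(x)=c(x)/x$.
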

\begin{proof}{Proof.}
Proving the claim amounts so showing that for all $v\in\mb{R}_{>0}, x\in\mb{N}$ it is
\be
\frac
{\sum_{i=1}^{x+1} a(x+1,i)\frac{v^i}{i!}}
{\sum_{j=0}^{x+1} b(x+1,j)\frac{v^j}{j!}}
\ge
\frac{\sum_{i=1}^x a(x,i)\frac{v^i}{i!}}
{\sum_{j=0}^x b(x,j)\frac{v^j}{j!}}
,
\label{eq:inequalitykkp1}
\ee
where we let $a(x,i)= xc(i)-ic(i-1)$ and $b(x,j)=x-j$ to ease the notation. Since the denominators on the left and right hand side are positive, and since $b(x+1,x+1)=0$,  \eqref{eq:inequalitykkp1} is equivalent to
\[
\sum_{i=1}^x\sum_{j=0}^x a(x+1,i)b(x,j)\frac{v^{i+j}}{i!j!} 
+
\sum_{j=0}^x b(x,j)a(x+1,x+1)\frac{v^{j+x+1}}{j!(x+1)!}
-
\sum_{i=1}^x\sum_{j=0}^xa(x,i)b(x+1,j)\frac{v^{i+j}}{i!j!}
\ge 0,
\]	
which we rewrite 
\be
\underbrace{\sum_{i=1}^x\sum_{j=0}^x \left(a(x+1,i)b(x,j)-
a(x,i)b(x+1,j)\right)\frac{v^{i+j}}{i!j!}}_{{\footnotesize\circled{1}}}
+
\underbrace{\sum_{j=0}^x b(x,j)a(x+1,x+1)\frac{v^{j+x+1}}{j!(x+1)!}}_{{\footnotesize\circled{2}}}
\ge 0.
\label{eq:thelastone}
\ee
We now turn our attention to each of the two terms appearing in the previous inequality. In particular, we will show that collecting all the contributions corresponding to the same power of $v$ significantly simplifies the expressions, and allows us to conclude. 

We begin with the second term, and substitute the definitions of $a$, $b$ in $b(x,j)a(x+1,x+1)$ so that 
\be
{\footnotesize\circled{2}}=\sum_{j=0}^x \frac{x-j}{j!x!}(c(x+1)-c(x))v^{j+x+1}.
\label{eq:circ2}
\ee

We now focus on the first term, and observe that 
\[
\begin{split}
a(x+1,i)b(x,j)-a(x,i)b(x+1,j)
&\!=\!
((x+1)c(i)-ic(i-1))(x-j)\!-\!(xc(i)-ic(i-1))(x+1-j)\\
&\!=\!-jc(i)+ic(i-1),
\end{split}
\]
where we made use of the definitions of $a$ and $b$. %
Hence, we can utilize indices $i$ and $\p=i+j$ in place of $i$, $j$ to rewrite the first term appearing in \eqref{eq:thelastone} as
\be
\begin{split}
\label{eq:circ1}
{\footnotesize\circled{1}}=&\sum_{i=1}^x\sum_{j=0}^x (ic(i-1)-jc(i))\frac{v^{i+j}}{i!j!}\\
=&\sum_{\p=1}^{2x}\sum_{\substack{i\in [x]\\\text{s.t.~} \p-x\le i\le \p}}
\frac{ic(i-1)-(\p-i)c(i)}{i!(\p-i)!}v^{\p}\\
=&
\sum_{\p=x+1}^{2x}\sum_{i=\p-x}^x
\frac{ic(i-1)-(\p-i)c(i)}{i!(\p-i)!}v^{\p}
+
\sum_{\p=1}^{x}\sum_{i=1}^\p
\frac{ic(i-1)-(\p-i)c(i)}{i!(\p-i)!}v^{\p} \\
=&\sum_{\p=x+1}^{2x}\sum_{i=\p-x}^x
\frac{ic(i-1)-(\p-i)c(i)}{i!(\p-i)!}v^{\p}\\
=&\sum_{\p=x+1}^{2x}\frac{c(\p-x-1)-c(x)}{x!(\p-x-1)!}v^\p\\
=&\sum_{j=0}^{x-1}\frac{1}{j!x!}(c(j)-c(x))v^{j+x+1}
\end{split}
\ee
where the second line is obtained using the fact that $\p$ runs from $1$ to $2x$, $i\in[x]$ and $j=\p-i$ belongs to $0\le \p-i\le x$. The third line follows upon distinguishing the case of $x+1\le \p\le 2x$ and $\p\in [x]$. The fourth line is due to the fact that the second summand in the third line vanishes, since 
\[
\sum_{i=1}^\p
\frac{ic(i-1)-(\p-i)c(i)}{i!(\p-i)!}v^{\p} =
\sum_{i=2}^\p\frac{c(i-1)}{(i-1)!(\p-i)!}-\sum_{i=1}^{\p-1} \frac{c(i)}{i!(\p-i-1)!}
=0.
\]
The fifth line follows from 
\[
\begin{split}
\sum_{i=\p-x}^x&
\frac{ic(i-1)-(\p-i)c(i)}{i!(\p-i)!}
=
\sum_{i=\p-x}^x
\frac{c(i-1)}{(i-1)!(\p-i)!}
-
\sum_{i=\p-x}^x
\frac{c(i)}{i!(\p-i-1)!}\\
&=\frac{c(\p-x-1)}{(\p-x-1)!x!} + 
\sum_{i=\p-x+1}^x
\frac{c(i-1)}{(i-1)!(\p-i)!}
-
\sum_{i=\p-x}^{x-1}
\frac{c(i)}{i!(\p-i-1)!}
- \frac{c(x)}{x!(\p-x-i)!}\\
&=\frac{c(\p-x-1)-c(x)}{x!(\p-x-1)!}
\end{split}
\]
The final line is derived reverting to the original indices $i$ and $j$.

Thus, in light of \eqref{eq:circ1} and \eqref{eq:circ2}, the inequality \eqref{eq:inequalitykkp1} we need to show reduces to
\[
\sum_{j=0}^{x-1}[c(j)-c(x)+ (x-j)(c(x+1)-c(x))]\frac{v^{j+x+1}}{j!x!}\ge0\quad\forall v\in\mb{R}_{>0},~x\in\mb{N}.
\] 
The summand corresponding to $j=0$ reads as $xc(x+1)-(x+1)c(x)=x(x+1)(b(x+1)-b(x))$, and it is non-negative since $b$ is non-decreasing. All other summands are non-negative thanks to the convexity of $c$, that guarantees $c(x+1)-c(x) \ge (c(x)-c(j))/(x-j)$ as $1\le j<x$, so that
$c(j)-c(x)+ (x-j)(c(x+1)-c(x))\ge0$. One concludes using these observations and $v>0$.
\hfill\qed
\end{proof}

\end{appendices}

\end{document}